\documentclass[journal,onecolumn]{IEEEtran}

\usepackage{amsmath, amssymb, amsfonts, amsthm, dsfont}
\usepackage{mathtools}
\usepackage{bm}

\usepackage{graphicx}
\usepackage{subcaption}
\usepackage{float}
\usepackage{xcolor}

\graphicspath{{figures/}}

\usepackage{booktabs}
\usepackage{array}
\usepackage{multirow}
\usepackage{makecell}
\usepackage{tabularx}
\usepackage{longtable}
\usepackage{threeparttable}

\usepackage{caption}

\usepackage{enumitem}

\theoremstyle{plain}
\newtheorem{theorem}{Theorem}
\newtheorem{lemma}{Lemma}
\newtheorem{proposition}{Proposition}

\theoremstyle{definition}

\theoremstyle{remark}
\newtheorem{remark}{Remark}

\newcommand{\Prob}{\mathbb{P}}

\newcommand{\Var}{\operatorname{Var}}

\newcommand{\argmin}{\operatorname*{arg\,min}}

\newcommand{\dd}{\,\mathrm{d}}

\newcommand{\dcs}{D_{\it CS}}

\title{Exact Sampling Under Campbell Cost via R\'enyi Channel Simulation Divergence}
\author{Yijun Fan and Aaron B.~Wagner}
\date{}

\allowdisplaybreaks

\usepackage{setspace}
\begin{document}

\maketitle

\begin{abstract}
We study the communication cost of exact sampling under Campbell's exponential coding criterion. Given a target distribution $P$ and a proposal distribution $Q$, the sender and receiver share i.i.d. $Q$-distributed samples, and the sender communicates an index so that the selected sample has distribution exactly $P$. 
We introduce a new information measure R\'enyi channel simulation divergence \(R_\alpha(P\|Q)\), a R\'enyi generalization of the channel simulation divergence. We then show that it lower bounds the Campbell cost of every exact sampler and is achievable, via a new layered sampling scheme, within an additive constant depending only on the Campbell parameter. This yields a tighter one-shot characterization than previous bounds based on R\'enyi divergences of different orders.
For i.i.d. product distributions, under suitable non-degeneracy and moment conditions, we further show that the gap beyond the leading R\'enyi-divergence term is $\frac{1}{2}\log n+O(1)$. 
\end{abstract}

\section{Introduction}

Exact sampling arises when the receiver has access to some randomness, but not to the right randomness for the task to be handled. A representative example is channel simulation. Consider a stochastic channel $P_{Y|X}$. For a realized input $x$, the desired output law is $P_{Y|X=x}$. Instead of generating such an output at the sender and communicating its value, the two parties may first share a common collection of random candidates and use communication only to decide which candidate to select. In this way, communication is used to transform shared randomness into the desired channel output, rather than to describe the output.
This idea, also known as relative entropy coding, has been studied both
theoretically \cite{harsha2007communication}\cite{li2018strong}\cite{theis2022algorithms}
and as a practical compression technique
\cite{havasi2018minimal}\cite{flamich2020compressing}\cite{flamich2022fast}.

Continuous outputs provide a particularly natural motivation for this approach. An arbitrary realization from a continuous distribution generally cannot be represented exactly by a finite binary string. Exact sampling avoids communicating the realization itself. The sender and receiver may instead share continuously valued samples, then only a discrete description of the selected candidate needs to be sent. Thus, under suitable conditions, a finite message can produce an output following an continuous distribution exactly. The formulation in this paper allows precisely this general setting: the shared randomness is drawn from a probability measure on a general measurable space, rather than from a discrete alphabet.

Exact sampling focuses on the basic task of turning shared samples from one distribution into a sample from a desired target distribution. Suppose the sender and receiver share the same collection of samples drawn from a convenient proposal distribution $Q$, while the sample they ultimately want should follow another distribution $P$. Since both parties already possess the same candidates, their values need not be communicated. What remains is to decide which candidate should be used.

At first sight, it may seem surprising that selecting among $Q$-distributed samples can produce a sample having a different distribution $P$. The key is that the candidates need not be selected uniformly. Samples that are relatively more likely under $P$ than under $Q$ should receive greater selection weight, while those that are relatively less likely under $P$ should receive less. The likelihood ratio $\dd P/\dd Q$ provides exactly this relative weighting. Classical rejection sampling gives a familiar illustration of this principle: samples are proposed according to $Q$, but accepted with probabilities determined by their relative likelihood under $P$ and $Q$, so that the accepted sample has exactly the target distribution. The Poisson functional representation \cite{li2018strong} also reweights shared proposals according to the likelihood ratio. In both cases, the shared randomness supplies the candidate values, the selection rule reshapes their frequencies, and the sender only needs to identify which shared candidate has been selected.

Recent work on exact sampling and one-shot channel simulation has focused on characterizing the communication needed to select a target distributed sample from shared randomness. Harsha et al. \cite{harsha2007communication} established expected length bounds, Li and El Gamal then \cite{li2018strong} introduced the Poisson functional representation, which has become a central tool for one-shot exact simulation. More recently, Goc and Flamich \cite{goc2024channel} introduced the channel simulation divergence to obtain a sharper characterization for causal rejection samplers, and Flamich, Sriramu, and Wagner \cite{flamich2025redundancy} used it to study the asymptotic redundancy of i.i.d.\ channel simulation.

Hill, Alajaji, and Linder~\cite{hill2026communication} recently studied exact sampling under Campbell’s exponential coding criterion. They established one-shot converse bounds involving R\'enyi divergences, obtained achievability bounds using the Poisson functional representation, and identified the optimal first-order asymptotic rate for i.i.d. product distributions. Their results show that R\'enyi divergence captures the correct first order behavior asymptotically. At the one-shot level, however, the converse and achievability bounds involve R\'enyi divergences of different orders, together with an additional optimization parameter in the achievability result. Consequently, a single quantity characterizing both sides is still missing. 

Motivated by these prior results, we seek a single one-shot quantity that both lower bounds the Campbell cost and admits a matching achievability bound. The R\'enyi channel simulation divergence $R_\alpha(P\|Q)$ introduced by this paper satisfies both.
It yields a converse for every exact sampler, while a layered Poisson construction achieves it up to an additive constant depending only on the parameter at the exponent. 
We further study the i.i.d.\ and the asymptotic behavior as the block length $n$ tends to $\infty$. Under suitable regularity conditions, the asymptotic redundancy beyond the leading R\'enyi divergence term is $\frac{1}{2}\log n + O(1)$. 


\noindent\emph{Notations.} We use $\mathbb R$ to denote the real number set, $\mathbb R_+$ to denote non-negative real number set, $\mathbb N_0$ to denote non-negative integers $\{0,1,\ldots\}$, and $\mathbb N_1$ to denote positive integers $\{1,2,\ldots\}$. Throughout this paper, $\log$ is in base $2$.

\section{Problem Formulation}

Given two probability measures $P$ and $Q$ on a measurable space $(\mathcal U,\mathcal F)$ with $P\ll Q$, we consider an exact sampling problem. Both the sender and the receiver share common randomness $\{U_i\}_{i\in\mathbb N_1}\sim Q$ i.i.d. An exact sampler outputs an index $K$ distributed over $\mathbb N_1$ such that $U_K\sim P$. Suppose the index $K$ is encoded by a uniquely decodable binary code \[
    C:\mathbb N_1\to\{0,1\}^*
\]
with codeword length $\ell(C(K))$. For $t>0$, we study the Campbell cost defined by \begin{equation}
    L(t) = \frac{1}{t}\log \mathbb E\left[2^{t\ell(C(K))}\right].
    \label{eq::def_L_t}
\end{equation}
Throughout this paper, we denote $p_k=\mathbb P(K=k)$ and let $\alpha = \frac{1}{1+t}\in (0,1)$. The R\'enyi entropy of order $\alpha$ of $K$ by \[
    H_\alpha(K) = \frac{1}{1-\alpha} \log\sum_k p_k^\alpha.
\]
Campbell's source coding theorem \cite{campbell1965coding} states that, for any uniquely decodable binary code, \begin{equation}
    H_\alpha(K)\leq L(t)
    \label{eq::lower_bd_h_alpha}
\end{equation}
and there exists a prefix code such that \begin{equation}
    L(t)<H_\alpha(K)+1,
    \label{eq::upper_bd_h_lapha}
\end{equation}
whenever $H_\alpha(K)<\infty$. Therefore, for a fixed exact sampler, the optimal Campbell cost is characterized by $H_\alpha(K)$ up to one bit. In the following sections, we characterize this quantity through the likelihood ratio between $P$ and $Q$.

As mentioned in \cite{hill2026communication}, Campbell's original proof was stated for a finite alphabet, but the same argument extends to a countably infinite alphabet. Hence, the result applies directly to the index $K\in\mathbb N_1$ considered here. Therefore, for a fixed exact sampler, the optimal Campbell coding cost is characterized by $H_\alpha(K)$ up to one bit, and the remaining problem is to characterize how small $H_\alpha(K)$ can be under the exactness constraint $U_K\sim P$. For the product problem with target $P^{\otimes n}$ and proposal $Q^{\otimes n}$, we denote by $L_n^*(t)$ the infimum of the Campbell cost over all exact samplers and uniquely decodable binary codes.

For $\beta\in(0,\infty)\setminus\{1\}$, the R\'enyi divergence of order $\beta$ between $P$ and $Q$ is defined by \cite{ervan2014renyi}
\[
D_\beta(P\|Q) := \frac{1}{\beta-1}\log \mathbb E_Q\left[\left(\frac{\dd P}{\dd Q}(X)\right)^\beta\right],
\]
whenever the right-hand side is well defined. In particular, for $\alpha\in(0,1)$,
\[
D_{1/\alpha}(P\|Q) = \frac{\alpha}{1-\alpha}\log \mathbb E_Q\left[\left(\frac{\dd P}{\dd Q}(X)\right)^{1/\alpha}\right].
\]
The R\'enyi divergence at order $\beta=1$ is defined by continuity that \[
D_1(P\|Q) := \lim_{\beta\to 1}D_\beta(P\|Q)= \int \log\frac{\dd P}{\dd Q}(u)\,\dd P(u)=D(P\|Q),
\]
where $D(P\|Q)$ is the KL divergence.
For a real-valued random variable $Z$ with density $f_Z$, its R\'enyi differential entropy of order $\alpha\in(0,1)$ is \[
h_\alpha(Z) := \frac{1}{1-\alpha}\log\int_{\mathbb R} f_Z(z)^\alpha\dd z.
\]
We will introduce a new information measure inspired by the above R\'enyi divergence. It refines the existing R\'enyi divergence based characterization of the Campbell cost and yields a tighter one-shot description.

\section{Existing Results and Contribution Summary}

\subsection{Existing Results}

Recall from \eqref{eq::def_L_t} that, whenever the corresponding limit is finite, \[
\lim_{t\downarrow0}L(t)=\mathbb E[\ell(C(K))].
\]
Thus, the expected codeword length is recovered as the limit $t\downarrow 0$ of the Campbell cost. Under this criterion, one-shot exact channel simulation has been studied through functional representations and Poisson-based constructions. Li and El Gamal~\cite{li2018strong} introduced the strong functional representation lemma and the Poisson functional representation to derive expected length achievability bounds for one-shot exact channel simulation. In the exact sampling formulation considered here, the same Poisson representation yields communication bounds in terms of $D_{1/\alpha}(P\|Q)$ \cite{hill2026communication}. 
Goc and Flamich \cite{goc2024channel} introduced the channel simulation divergence \(D_{\rm CS}(P\|Q)\) in their study of causal rejection samplers. Such samplers examine the shared \(Q\)-distributed proposals sequentially, with each accept or reject decision depending only on the proposals observed so far. Once a proposal is accepted, its index is communicated to the receiver. They showed that \(D_{\rm CS}(P\|Q)\) gives a sharper one-shot characterization than KL divergence for this class of samplers.
This channel simulation divergence can be used to characterize the second-order redundancy of non-singular i.i.d. channel simulation~\cite{flamich2025redundancy}.

The Campbell cost places increasing weight on long codewords as $t$ grows. Hence, the expected length results do not directly characterize the regime $t>0$. Hill, Alajaji, and Linder~\cite{hill2026communication} studied exact sampling under this exponential communication criterion. They derived one-shot converse bounds with leading R\'enyi-divergence terms and achievability bounds based on the Poisson functional representation.
For the product problem with target \(P^{\otimes n}\) and proposal \(Q^{\otimes n}\), let \(L_n^*(t)\) denote the infimum of the Campbell cost over all exact samplers and uniquely decodable binary codes. The work \cite{hill2026communication} also established the first-order asymptotic characterization
\[
\lim_{n\to\infty}\frac{L_n^*(t)}{n}=D_{1/\alpha}(P\|Q).
\]
These results motivate a more unified description of the Campbell cost. In this work, we pursue a one-shot quantity that can be used consistently in both converse and achievability arguments and that also supports a second order asymptotic analysis.

\subsection{Contributions}

In this paper, we introduce the R\'enyi channel simulation divergence $R_\alpha(P\|Q)$ by replacing the Shannon entropy of the width distribution in $\dcs(P\|Q)$ with its R\'enyi entropy of order $\alpha$. Equivalently, $R_\alpha(P\|Q)$ is the R\'enyi entropy of the random variable whose density is the width function $w_Q(h)$. We show that $R_\alpha(P\|Q)$ is non-increasing in $\alpha$, that its limits as $\alpha\downarrow0$ and $\alpha\uparrow1$ recover the essential supremum likelihood ratio quantity and the channel simulation divergence, respectively.
We further give an exact representation of the gap $R_\alpha(P\|Q)-D_{1/\alpha}(P\|Q)$ in terms of a R\'enyi differential entropy under an exponentially tilted measure.

For the one-shot exact sampling problem, we prove in Section~\ref{sec::converse} the converse
\[
L(t)\geq R_\alpha(P\|Q)
\]
for every exact sampler. This lower bound dominates the R\'enyi-divergence lower bounds in \cite{hill2026communication}. For achievability, we introduce a layered sampling scheme that decomposes the likelihood ratio into bounded unit-height layers. Each layer is represented by an independent marked Poisson process and is thinned according to its layer function; the first accepted point across all layers is then mapped through a bijection to a single index in the shared i.i.d. $Q$-sequence. This construction differs from the standard Poisson functional representation, which operates on a single Poisson process by rescaling arrival times with the likelihood ratio. For the layered scheme, we prove \[
L(t)<R_\alpha(P\|Q)+\frac{1}{1-\alpha}\log\frac{1}{\alpha(2^\alpha-1)}+1.
\]
Thus, for every fixed $\alpha\in(0,1)$, the same quantity $R_\alpha(P\|Q)$ appears in both the converse and achievability bounds, with an explicit additive gap depending only on $\alpha$. The resulting characterization here is therefore direct and does not require matching bounds through different R\'enyi orders.

Section~\ref{sec::asymptotic_redundancy} further sharpens the asymptotic redundancy analysis. Under non-degeneracy and moment conditions, we prove the exact logarithmic redundancy \[
R_\alpha(P^{\otimes n}\|Q^{\otimes n})-D_{1/\alpha}(P^{\otimes n}\|Q^{\otimes n})
=\frac{1}{2}\log n+O(1).
\]
This recovers the first-order asymptotic result while identifying the second-order growth of the R\'enyi channel simulation divergence. 

\section{R\'enyi Generalized Channel Simulation Divergence}
For any $u\in\mathcal U$, denote the likelihood ratio by \[
R(u) := \frac{\dd P}{\dd Q}(u).
\]
For this exact sampling problem, we measure different layers of this ratio, i.e., $\{u: R(u)\geq \ell\}$, by the \emph{width function} defined as \begin{equation}
    w_Q(\ell) := Q(\{u: R(u)\geq \ell\}), \qquad \ell\geq 0.
    \label{def::width_function}
\end{equation}
Since $R(u)\geq 0$ and $\mathbb E_Q[R]=1$, we have \[
\int_0^\infty w_Q(\ell)\dd\ell = \mathbb E_Q[R] = 1,
\]
implying that the density function $w_Q$ is a probability density on $[0,\infty)$.
In the following sections, we will show that $L(t)$ can be described by \emph{R\'enyi channel simulation divergence} of order $\alpha$, defined by \begin{equation}
    R_\alpha(P\|Q) := \frac{1}{1-\alpha}\log \int_0^\infty w_Q(\ell)^\alpha\dd \ell
    \label{def::renyi_cs_div},
\end{equation}
where $\alpha = 1/(1+t)$ is in $(0,1)$. That is, the R\'enyi entropy of order $\alpha$ of the random variable $H$ with density $w_Q(\ell)$.

Before discussing $L(t)$, we first prove some properties of $R_\alpha(P\|Q)$.
\begin{proposition}
    For probability meausres $P,Q$ on $\mathcal{U}$ with $P\ll Q$ and $\alpha\in(0,1)$, $R_\alpha(P\|Q)$ is non-increasing in $\alpha$, and \[
    \lim_{\alpha\downarrow 0}R_\alpha(P\|Q) = \log \operatorname{ess\, sup}_{u\sim Q} R(u).
    \]
    Moreover, if there exists $\alpha_0\in (0,1)$ such that $\int_0^\infty w_Q(\ell)^{\alpha_0}|\ln w_Q(\ell)|\dd\ell<\infty$,
    then \[
    \lim_{\alpha\uparrow 1} R_\alpha(P\|Q) = -\int_0^\infty w_Q(\ell)\log w_Q(\ell)\dd\ell\,=\dcs(P\|Q).
    \]
\end{proposition}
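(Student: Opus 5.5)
The plan is to treat $R_\alpha(P\|Q)$ as the R\'enyi entropy of order $\alpha$ of a random variable $H$ with density $w_Q$ on $[0,\infty)$, and to prove the three claims by standard R\'enyi-entropy arguments adapted to this density. Write $F(\alpha):=\int_0^\infty w_Q(\ell)^\alpha\dd\ell$. Since $w_Q\le 1$, the map $\alpha\mapsto F(\alpha)$ is non-increasing. Let $M:=\operatorname{ess\,sup}_{u\sim Q}R(u)$. Then $w_Q(\ell)>0$ iff $\ell\le M$ (up to the endpoint), and $w_Q(\ell)=0$ for $\ell>M$.

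\emph{Monotonicity.} Write $\int w^\alpha = \mathbb E_{w}[w(H)^{\alpha-1}]$, with $H\sim w_Q$. Then $R_\alpha = \frac{1}{1-\alpha}\log \mathbb E[w(H)^{-(1-\alpha)}]$. Set $s=1-\alpha\in(0,1)$ and $X=\log(1/w(H))$. This gives $R_\alpha=\frac1s\log\mathbb E[2^{sX}]$, which is the normalized cumulant generating function. For $0<s_1<s_2$, Jensen (or Lyapunov) applied to the convex map $x\mapsto x^{s_2/s_1}$ gives $(\mathbb E 2^{s_1X})^{1/s_1}\le(\mathbb E2^{s_2X})^{1/s_2}$. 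Hence $R_\alpha$ is non-decreasing in $s$, i.e.\ non-increasing in $\alpha$. Infinite values are handled trivially, since if the smaller-$s$ moment is infinite, so is the larger one.

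\emph{Limit $\alpha\downarrow0$.} The claim is that $F(\alpha)\to\lambda(\{\ell:w_Q(\ell)>0\})=M$, with $\frac{1}{1-\alpha}\to1$.
\begin{itemize}
\item If $M<\infty$: $w^\alpha\uparrow\mathbf 1\{w>0\}$ as $\alpha\downarrow0$, dominated by $\mathbf 1_{[0,M]}$. Dominated (or monotone) convergence then gives $F(\alpha)\to M$, so $R_\alpha\to\log M$.
\item If $M=\infty$: monotone convergence gives $F(\alpha)\uparrow\infty$, so $R_\alpha\to\infty=\log M$.
\end{itemize}
One subtlety: $F(\alpha)$ may be infinite for small $\alpha$. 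This is harmless, because monotonicity makes the limit well defined in $[0,\infty]$.

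\emph{Limit $\alpha\uparrow1$.} $F(1)=1$, so this is a $0/0$ form. I will apply L'H\^opital, or the mean value theorem directly:
\[
R_\alpha=\frac{\log F(\alpha)-\log F(1)}{1-\alpha}.
\]
It suffices to show that $F$ is differentiable from the left at $1$ with
\[
F'(1)=\int w\ln w\,\dd\ell .
\]
Then $R_\alpha\to -F'(1)/(F(1)\ln2)=-\int w\log w$.

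\emph{Main obstacle: justifying differentiation under the integral.} The difference quotient is
\[
\frac{w^\alpha-w}{\alpha-1}=w^{\xi}\ln w
\]
for some $\xi\in(\alpha,1)$. Since $w\le1$, for $\alpha\ge\alpha_0$ we have $|w^\xi\ln w|\le w^{\alpha_0}|\ln w|$, which is integrable by hypothesis. Dominated convergence then yields the derivative. The same bound shows that $F(\alpha)<\infty$ on $[\alpha_0,1]$ and that $\int w|\ln w|<\infty$, so the limit is finite and equals $\dcs(P\|Q)$ by the definition of the channel simulation divergence as the differential entropy of $w_Q$.
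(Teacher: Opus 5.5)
Your proposal is correct and follows essentially the same route as the paper. The $\alpha\downarrow 0$ limit comes from $w_Q^\alpha\uparrow\mathbf 1\{w_Q>0\}$, and the $\alpha\uparrow 1$ limit comes from differentiating $\int_0^\infty w_Q^\alpha\,\dd\ell$ at $\alpha=1$ under the same dominating function $w_Q^{\alpha_0}|\ln w_Q|$; you are somewhat more careful than the paper in three places: you justify the limit interchange by monotone/dominated convergence, you treat $\operatorname{ess\,sup} R=\infty$ and infinite $\int w_Q^\alpha$, and you take the one-sided derivative directly instead of invoking L'H\^opital. For monotonicity, your Lyapunov inequality for $\mathbb E[w_Q(H)^{\alpha-1}]$ with $H\sim w_Q$ is just a repackaging of the paper's H\"older log-convexity and secant-slope argument anchored at $\alpha=1$, so nothing substantive differs.
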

\begin{proof}
    We prove the monotonicity first. By H\"older's inequality, for any $\alpha_0, \alpha_1\in(0,1)$ and $\theta\in[0,1]$, we have \[
    \int_0^\infty w_Q(\ell)^{(1-\theta)\alpha_0+\theta \alpha_1}\dd \ell\leq \left(\int_0^\infty w_Q(\ell)^{\alpha_0}\dd\ell\right)^{1-\theta}\left(\int_0^\infty w_Q(\ell)^{\alpha_1}\dd\ell\right)^\theta,
    \]
    implying that $\log \int_0^\infty w_Q(\ell)^\alpha\dd\ell$ is convex on $(0,1)$. Define \[
    F(\alpha) = \log \int_0^\infty w_Q(\ell)^\alpha\dd\ell.
    \]
    We know that $F(1)=0$ and, by convexity, the secant slope \[
    \frac{F(1)-F(\alpha)}{1-\alpha} = \frac{-\log \int_0^\infty w_Q(\ell)^\alpha\dd\ell}{1-\alpha}
    \]
    is non-decreasing in $\alpha$ for $\alpha\in(0,1)$. Since $R_\alpha(P\|Q) = -\frac{F(1)-F(\alpha)}{1-\alpha}$, it is non-increasing in $\alpha\in(0,1)$.

    Next, consider the limit as $\alpha\downarrow 0$. We have \[
    \lim_{\alpha\downarrow 0} \int_0^\infty w_Q(\ell)^\alpha\dd\ell = \int_0^\infty \lim_{\alpha\downarrow 0} w_Q(\ell)^\alpha\dd\ell = \int_0^\infty \mathds{1}(w_Q(\ell)>0)\dd\ell = \lambda\{\ell:w_Q(\ell)>0\} = \lambda\{\ell:Q(R\geq\ell)>0\},
    \]
    which is $\operatorname{ess\, sup}_{u\sim Q} R(u)$. Since $1-\alpha$ becomes $1$ as $\alpha\downarrow 0$, we have \[
    \lim_{\alpha\downarrow 0}R_\alpha(P\|Q) = \log \operatorname{ess\, sup}_{u\sim Q}R(u).
    \]

    We then prove the limit to $1$. Since $w_Q(\ell)\in[0,1]$, for any $\alpha\in[\alpha_0,1]$, we have $w_Q(\ell)^\alpha\leq w_Q(\ell)^{\alpha_0}$, and $|w_Q(\ell)^\alpha\ln w_Q(\ell)|\leq w_Q(\ell)^{\alpha_0}|\ln w_Q(\ell)|$. Hence, the dominating convergence implies that, as $\alpha\to 1$, \[
    \int_0^\infty w_Q(\ell)^\alpha\dd\ell\rightarrow \int_0^\infty w_Q(\ell)\dd\ell = 1.
    \]  
    
    Now, we can compute the limit using L'Hospital's theorem. \begin{align*}
        \lim_{\alpha\uparrow 1}\frac{\log \int_0^\infty w_Q(\ell)^\alpha\dd\ell}{1-\alpha} &= \lim_{\alpha\uparrow 1}-\frac{\dd}{\dd \alpha} \log \int_0^\infty w_Q(\ell)^\alpha\dd\ell \\
        &= \lim_{\alpha\uparrow 1}-\frac{\left(\int_0^\infty w_Q(\ell)^\alpha\dd\ell\right)'}{\ln 2\int_0^\infty w_Q(\ell)^\alpha\dd\ell}\\
        &=\lim_{\alpha\uparrow 1}-\frac{\int_0^\infty w_Q(\ell)^\alpha\ln w_Q(\ell)\dd\ell}{\ln 2\int_0^\infty w_Q(\ell)^\alpha\dd\ell}\\
        &=-\frac{\int_0^\infty w_Q(\ell)\log w_Q(\ell)\dd\ell}{\log e\ln 2\int_0^\infty w_Q(\ell)\dd\ell} =-\int_0^\infty w_Q(\ell)\log w_Q(\ell)\dd\ell.
    \end{align*}
\end{proof}

The limit $\lim_{\alpha\uparrow 1}R_\alpha(P\|Q)$ coincides with the channel simulation divergence $\dcs(P\|Q)$ defined in \cite{goc2024channel}.

For $\alpha \in (0,1)$, let $\imath(x) := \ln R(x) = \ln\frac{\dd P}{\dd Q}(x)$ and $Q_{1/\alpha}$ denote the tilted measure \[
  \frac{\dd Q_{1/\alpha}}{\dd Q}(x) = \frac{R(x)^{1/\alpha}}{\mathbb{E}_Q[R(X)^{1/\alpha}]}.
\]

\begin{proposition}
\label{prop::renyi_width_gap}
Suppose $P\ll Q$ and $\mathbb{E}_Q[R^{1/\alpha}] < \infty$. Let $G_\alpha$ be the random variable with density \begin{equation}
\label{eq:galpha}
  g_\alpha(z) := \frac{e^{z/\alpha}\,w_Q(e^{z})}{\alpha\,\mathbb{E}_Q[R^{1/\alpha}]}, \qquad z\in\mathbb{R}.
\end{equation}
Then \begin{equation}
\label{eq:gapid}
  R_\alpha(P\|Q) - D_{1/\alpha}(P\|Q) = h_\alpha(G_\alpha) + \frac{\alpha}{1-\alpha}\log\alpha \geq 0,
\end{equation}
with \begin{equation}
  \label{eq:Grep}
  G_\alpha = \imath(X) - \alpha E,
\end{equation}
where $X\sim Q_{1/\alpha}$ and $E\sim \mathrm{Exp}(1)$ independent of $X$.
\end{proposition}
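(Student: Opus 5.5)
The plan is to prove the identity \eqref{eq:gapid} by a change of variables, then the representation \eqref{eq:Grep} by computing the density of $\imath(X)-\alpha E$, and finally to read off nonnegativity from a maximum-entropy (or Campbell-type) bound. Write $M:=\mathbb E_Q[R^{1/\alpha}]$.

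First I check that $g_\alpha$ is a density. Substituting $\ell=e^z$ gives
\[
\int_{\mathbb R} e^{z/\alpha}w_Q(e^z)\dd z=\int_0^\infty \ell^{1/\alpha-1}w_Q(\ell)\dd\ell .
\]
By Fubini (layer-cake) this equals $\mathbb E_Q\big[\int_0^{R}\ell^{1/\alpha-1}\dd\ell\big]=\alpha M$, so $g_\alpha$ integrates to one.

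Next I compute $\int g_\alpha^\alpha$ with the same substitution:
\[
\int_{\mathbb R} g_\alpha(z)^\alpha\dd z=(\alpha M)^{-\alpha}\int_{\mathbb R} e^{z}w_Q(e^z)^\alpha\dd z=(\alpha M)^{-\alpha}\int_0^\infty w_Q(\ell)^\alpha\dd\ell .
\]
Taking $\frac{1}{1-\alpha}\log$ of both sides gives
\[
h_\alpha(G_\alpha)=R_\alpha(P\|Q)-\frac{\alpha}{1-\alpha}\log\alpha-\frac{\alpha}{1-\alpha}\log M .
\]
Since $D_{1/\alpha}(P\|Q)=\frac{\alpha}{1-\alpha}\log M$, this is exactly the identity in \eqref{eq:gapid}.

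For \eqref{eq:Grep}, condition on $X=x$ and set $z=\imath(x)-\alpha e$. The conditional density of $\imath(x)-\alpha E$ at $z$ is $\frac1\alpha e^{-(\imath(x)-z)/\alpha}\mathds 1(z\le \imath(x))$. Integrating against $\dd Q_{1/\alpha}(x)=R(x)^{1/\alpha}\dd Q(x)/M$, the factor $R(x)^{1/\alpha}=e^{\imath(x)/\alpha}$ cancels $e^{-\imath(x)/\alpha}$. This leaves
\[
\frac{e^{z/\alpha}}{\alpha M}\,Q(\imath(X)\ge z)=\frac{e^{z/\alpha}w_Q(e^z)}{\alpha M}=g_\alpha(z)
\]
under $Q$, which is the claimed representation.

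The main obstacle is nonnegativity, i.e.\ $h_\alpha(G_\alpha)\ge \frac{\alpha}{1-\alpha}\log\frac1\alpha$. The route I would take is to use that $G_\alpha=\imath(X)-\alpha E$ is a sum of independent variables, so it should be at least as spread out as $-\alpha E$. Concretely, I would show $\int g_\alpha^\alpha\le\int f_{-\alpha E}^\alpha$. The density $g_\alpha$ is a mixture over $x$ of translates of $f_{-\alpha E}$, and $t\mapsto t^\alpha$ is concave, so Jensen gives
\[
\int\Big(\int f(z-y)\dd\mu(y)\Big)^\alpha\dd z\ \ge\ \int\!\!\int f(z-y)^\alpha\dd\mu(y)\dd z=\int f^\alpha .
\]
Hence $h_\alpha(G_\alpha)\ge h_\alpha(-\alpha E)$, the standard fact that Rényi entropy increases under independent convolution via mixing; this direction is the correct one because $\frac1{1-\alpha}>0$. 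It then remains to compute
\[
\int\big(\tfrac1\alpha e^{-u/\alpha}\big)^\alpha\dd u=\alpha^{-\alpha}\int_0^\infty e^{-u}\dd u=\alpha^{-\alpha},
\]
so $h_\alpha(\alpha E)=\frac{-\alpha}{1-\alpha}\log\alpha$. Adding $\frac{\alpha}{1-\alpha}\log\alpha$ then yields exactly $\ge0$.

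Finiteness is the only other point to monitor: $M<\infty$ makes all the integrals well defined, and if $\int w_Q^\alpha=\infty$ both sides of \eqref{eq:gapid} are $+\infty$, so the identity holds in the extended sense.
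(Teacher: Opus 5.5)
Your proposal is correct and follows the paper's proof essentially step for step: the same substitution $\ell=e^z$ for the identity, the same conditional-density computation for $G_\alpha=\imath(X)-\alpha E$, and the same pointwise Jensen inequality over the $Q_{1/\alpha}$-mixture of shifted exponential kernels, giving $\int g_\alpha^\alpha\ge\alpha^{-\alpha}$. One slip: the sentence saying you would show $\int g_\alpha^\alpha\le\int f_{-\alpha E}^\alpha$ has the inequality reversed, while your displayed Jensen step correctly gives $\ge$, which is the direction the conclusion needs.
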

 
\begin{proof}
By the definition of $g_\alpha(z)$, we have \begin{align*}
    h_\alpha(G_\alpha) &= \frac{1}{1-\alpha}\log \int_{\mathbb R} g_\alpha(z)^\alpha\dd z\\
    &= \frac{1}{1-\alpha}\log \int_{\mathbb R} \left(\frac{e^{z/\alpha}\,w_Q(e^{z})}{\alpha\,\mathbb{E}_Q[R^{1/\alpha}]}\right)^\alpha\dd z\\
    &= \frac{1}{1-\alpha} \log \left[\frac{1}{\alpha^\alpha (\mathbb{E}_Q[R^{1/\alpha}])^\alpha}\int_{\mathbb R}e^z w_Q(e^z)^\alpha\dd z\right]\\
    &= \frac{1}{1-\alpha} \log \left[\frac{1}{\alpha^\alpha (\mathbb{E}_Q[R^{1/\alpha}])^\alpha}\int_0^\infty w_Q(h)^\alpha\dd h\right]\\
    &= \frac{1}{1-\alpha}\log \int_0^\infty w_Q(h)^\alpha\dd h - \frac{\alpha}{1-\alpha}\log \alpha - \frac{\alpha}{1-\alpha}\log\mathbb E_Q[R^{1/\alpha}]\\
    &= R_\alpha(P\|Q) - \frac{\alpha}{1-\alpha}\log \alpha - D_{1/\alpha}(P\|Q).
\end{align*}

Next, we prove that the difference $R_\alpha(P\|Q)-D_{1/\alpha}(P\|Q)$ is non-negative. For every $z\in\mathbb R$, \begin{align*}
    g_\alpha(z) &= \frac{e^{z/\alpha}Q(R\geq e^z)}{\alpha\mathbb E_Q[R^{1/\alpha}]}\\
    &= \frac{1}{\mathbb E_Q[R^{1/\alpha}]}\int \frac{1}{\alpha}e^{z/\alpha}\mathds 1(R(x)\geq e^z) \dd Q(x)\\
    &= \int \frac{1}{\alpha}e^{(z-\ln R(x))/\alpha}\mathds 1(R(x)\geq e^z) \frac{R(x)^{1/\alpha}}{\mathbb E_Q[R^{1/\alpha}]} \dd Q(x)\\
    &= \mathbb E_{Q_{1/\alpha}}\left[\frac{1}{\alpha}e^{(z-\imath(X))/\alpha}\mathds 1(\imath(X)\geq z)\right].
\end{align*}
Since $t\to t^\alpha$ is concave for $\alpha\in(0,1)$, by Jensen's inequality, \begin{align*}
    g_\alpha(z)^\alpha &= \left(\mathbb E_{Q_{1/\alpha}}\left[\frac{1}{\alpha}e^{(z-\imath(X))/\alpha}\mathds 1(\imath(X)\geq z)\right]\right)^\alpha\\
    &\geq \mathbb E_{Q_{1/\alpha}}\left[\left(\frac{1}{\alpha}e^{(z-\imath(X))/\alpha}\mathds 1(\imath(X)\geq z)\right)^\alpha\right]\\
    &= \alpha^{-\alpha}\mathbb E_{Q_{1/\alpha}}\left[e^{z-\imath(X)}\mathds 1(\imath(X)\geq z)\right].
\end{align*}
Hence, integrating over $z$, we have \begin{align*}
    \int_{\mathbb R} g_\alpha(z)^\alpha\dd z &\geq \alpha^{-\alpha}\int_{\mathbb R}\mathbb E_{Q_{1/\alpha}}\left[e^{z-\imath(X)}\mathds 1(\imath(X)\geq z)\right] \dd z\\
    &= \alpha^{-\alpha}\mathbb E_{Q_{1/\alpha}}\left[\int_{-\infty}^{\imath(X)}e^{z-\imath(X)}\dd z\right]\\
    &= \alpha^{-\alpha}\mathbb E_{Q_{1/\alpha}}\left[e^{-\imath(X)}\cdot e^{\imath(X)}\right] = \alpha^{-\alpha},
\end{align*}
implying that \[
h_\alpha(G_\alpha) \geq \frac{1}{1-\alpha} \log \alpha^{-\alpha} = -\frac{\alpha}{1-\alpha}\log \alpha.
\]

Then consider $X\sim Q_{1/\alpha}$ and $E\sim \mathrm{Exp}(1)$ with $X\perp E$. Condition on $X=x$, the conditional density of $Z|(X=x) := \imath(x)-\alpha E$ is \[
\frac{1}{\alpha}\cdot e^{\frac{z-\imath(x)}{\alpha}}\cdot\mathds 1(z\leq \imath(x)).
\]
Therefore, \begin{align*}
    g_\alpha(z) &= \frac{e^{z/\alpha}\,w_Q(e^{z})}{\alpha\,\mathbb{E}_Q[R^{1/\alpha}]}\\
    &= \frac{e^{z/\alpha}\,Q(R\geq e^z)}{\alpha\,\mathbb{E}_Q[R^{1/\alpha}]}\\
    &= \frac{e^{z/\alpha}}{\alpha\,\mathbb{E}_Q[R^{1/\alpha}]} \int \mathds 1(R(x)\geq e^z)\dd Q(x)\\
    &= \int \frac{1}{\alpha} \cdot e^{\frac{z-\imath(x)}{\alpha}} \cdot \mathds 1(R(x)\geq e^z) \cdot \frac{R(x)^{1/\alpha}}{\mathbb E_Q[R^{1/\alpha}]} \dd Q(x)\\
    &= \int \frac{1}{\alpha} \cdot e^{\frac{z-\imath(x)}{\alpha}} \cdot \mathds 1(R(x)\geq e^z) \dd Q_{1/\alpha}(x)\\
    &= f_{\imath(X)-\alpha E}(z).
\end{align*}

\end{proof}

\section{Converse}
\label{sec::converse}

The converse follows from a subadditivity property of the functional underlying the R\'enyi channel simulation divergence. For a probability space $(\Omega,\mathcal F,\mu)$ and a nonnegative measurable function $f$, define
\[
\mathcal W_\alpha(f)
:=
\int_0^\infty \mu(f\geq \ell)^\alpha\,\dd \ell,
\qquad 0<\alpha<1.
\]
This functional has exactly the same layered structure as the R\'enyi channel simulation divergence. In particular, taking $\mu=Q$ and $f=R=\dd P/\dd Q$ gives
\[
\mathcal W_\alpha(R)
=
\int_0^\infty Q(R\geq \ell)^\alpha\,\dd \ell
=
\int_0^\infty w_Q(\ell)^\alpha\,\dd \ell,
\]
implying that $R_\alpha(P\|Q) = \frac{1}{1-\alpha}\log \mathcal W_\alpha(R)$.

The usefulness of $\mathcal W_\alpha$ comes from its subadditivity. Letting $v(A):=\mu(A)^\alpha$ for $A\in\mathcal F$, we have
\[
\mathcal W_\alpha(f) = \int_0^\infty v(\{f\geq \ell\})\dd \ell,
\]
which is the Choquet integral of $f$ with respect to the capacity $v$. Since $x\mapsto x^\alpha$ is concave for $0<\alpha<1$, the capacity $v(A)=\mu(A)^\alpha$ is submodular. The subadditivity theorem for Choquet integrals with respect to submodular capacities \cite{denneberg1994subadditivity} therefore suggests that $\mathcal W_\alpha$ is subadditive.

This property particularly fits exact sampling. For any exact sampler, define \(\nu_k(A):=\Pr(K=k,U_k\in A)\) and \(r_k:=\mathrm d\nu_k/\mathrm d Q\). The likelihood ratio $R$ can be decomposed into nonnegative components associated with the possible output indices, i.e., $R=\sum_{k\geq 1} r_k$. 
Subadditivity then allows us to control $\mathcal W_\alpha(R)$ by the corresponding functionals $\mathcal W_\alpha(r_k)$. Since each component satisfies $0\leq r_k\leq 1$ and $\int r_k\,\dd Q=P(K=k)$, these terms can in turn be bounded by the $\alpha$-power sum of the distribution of $K$. This will lead directly to
\[
R_\alpha(P\|Q)\leq H_\alpha(K)\leq L(t).
\]
We first establish the required subadditivity property.

\begin{lemma}
\label{lemma::subadditivity}
    Let $(\Omega, \mathcal F, \mu)$ be a probability space. Fix $\alpha\in(0,1)$. For a non-negative measurable function $f:\Omega\rightarrow \mathbb R_+$ and countable $\{f_k\}_{k\geq 1}$ with $f_k$ non-negative and measurable, 
    we have \begin{equation}
    \mathcal W_\alpha\left(\sum_{k=1}^\infty f_k\right) \leq \sum_{k=1}^\infty \mathcal W_\alpha(f_k).
    \label{eq::subadditivity}
    \end{equation}
\end{lemma}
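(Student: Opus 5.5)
The plan is to prove the two-function inequality $\mathcal W_\alpha(f+g)\le \mathcal W_\alpha(f)+\mathcal W_\alpha(g)$ for nonnegative measurable $f,g$, extend it to finite sums by induction, and pass to countable sums by monotone convergence. For the two-function case we use the Choquet integral viewpoint from the preceding discussion. The capacity $v(A)=\mu(A)^\alpha$ satisfies $v(\emptyset)=0$ and is monotone. It is also submodular, $v(A\cup B)+v(A\cap B)\le v(A)+v(B)$, because $\phi(x)=x^\alpha$ is concave with $\phi(0)=0$. Concretely, set $a=\mu(A\cap B)$, $b=\mu(A)$, $c=\mu(B)$ and $d=\mu(A\cup B)=b+c-a$. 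Then $a\le b,c\le d$ with $a+d=b+c$, so concavity gives $\phi(a)+\phi(d)\le\phi(b)+\phi(c)$.

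For the two-function inequality itself, I would reduce first to simple functions. Take nonnegative simple $f,g$ and refine to a common finite partition on which both are constant. On a finite partition, a submodular capacity is a submodular set function on a finite lattice, and the Choquet integral is its Lovász extension, which is convex and positively homogeneous; hence $\mathcal W_\alpha(f+g)=2\,\mathcal W_\alpha(\tfrac{f+g}{2})\le \mathcal W_\alpha(f)+\mathcal W_\alpha(g)$. If I want to avoid citing Lovász, I can instead use the greedy representation of the Choquet integral of a simple function $h$ with values sorted decreasingly on atoms $A_1,\dots,A_m$:
\[
\mathcal W_\alpha(h)=\sum_i h(A_i)\,[v(A_1\cup\dots\cup A_i)-v(A_1\cup\dots\cup A_{i-1})].
\]
Submodularity makes these increments a vertex of the base polytope, and the integral equals $\max_{\sigma}$ of the corresponding sum over all orderings. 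The maximum of linear functionals is subadditive, which gives the claim. For general $f,g$, approximate by increasing simple $f^{(n)}\uparrow f$ and $g^{(n)}\uparrow g$. Since $\{h_n\ge \ell\}$ increases to $\{h>\ell\}$, continuity of $\mu$ from below plus monotone convergence in $\ell$ give $\mathcal W_\alpha(h_n)\uparrow\int_0^\infty\mu(h>\ell)^\alpha\dd\ell=\mathcal W_\alpha(h)$. The sets $\{h>\ell\}$ and $\{h\ge\ell\}$ differ in measure for only countably many $\ell$, so the two integrals coincide.

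For countable sums, write $S_N=\sum_{k\le N}f_k\uparrow S=\sum_k f_k$. The finite case gives $\mathcal W_\alpha(S_N)\le\sum_{k\le N}\mathcal W_\alpha(f_k)\le\sum_{k=1}^\infty \mathcal W_\alpha(f_k)$. The same monotone-continuity argument gives $\mathcal W_\alpha(S_N)\uparrow\mathcal W_\alpha(S)$, including the case where $S$ takes the value $\infty$ on a set. The role of $f$ in the statement is just $f=\sum_k f_k$.

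The main obstacle is the simple-function subadditivity step, which is where submodularity genuinely enters. My first attempt would be the ``max over orderings'' argument: show that for submodular $v$ and any ordering $\sigma$ of the atoms, the telescoping increments $\delta_\sigma$ satisfy $\sum_i h(A_i)\delta_\sigma(A_i)\le\mathcal W_\alpha(h)$, with equality for the decreasing ordering. This is Edmonds' greedy lemma, proved by adjacent transpositions: swapping two neighbours out of decreasing order changes the sum by $(h(A_i)-h(A_{i+1}))\times$ a difference of increments that has a fixed sign by submodularity. Alternatively, I could simply invoke the subadditivity theorem of \cite{denneberg1994subadditivity} for Choquet integrals with respect to submodular capacities, after checking its hypotheses, and keep the argument above only for the limit passages.
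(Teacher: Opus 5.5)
Your proof is correct. It differs from the paper's mainly in how you obtain the two-function inequality $\mathcal W_\alpha(f+g)\le\mathcal W_\alpha(f)+\mathcal W_\alpha(g)$.

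The paper checks submodularity of $v=\mu^\alpha$ and then invokes Denneberg's subadditivity theorem for Choquet integrals of arbitrary nonnegative measurable functions. Its submodularity check differentiates $F(x)=(z+x)^\alpha+(z+y)^\alpha-(z+x+y)^\alpha-z^\alpha$ in $x$; your concavity/majorization argument with $a+d=b+c$ is equivalent. The countable extension via monotone convergence is the same in both.

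You instead prove the finite case yourself:
\begin{itemize}
\item You restrict to simple functions on a common finite partition. There the Choquet integral is the Lov\'asz extension of a submodular set function on $2^{[m]}$.
\item Convexity plus positive homogeneity then gives subadditivity. Equivalently, Edmonds' greedy formula $\mathcal W_\alpha(h)=\max_\sigma\langle h,\delta_\sigma\rangle$, proved by adjacent transpositions, gives it.
\item You then pass to general $f,g$ by increasing simple approximations.
\end{itemize}
This costs an extra limit passage. In return the argument is self-contained and shows exactly where submodularity enters: the sign of $\hat v(S\cup\{a\})+\hat v(S\cup\{b\})-\hat v(S)-\hat v(S\cup\{a,b\})$ in the swap step. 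The paper's route is shorter but hides this inside the citation. Your fallback of citing Denneberg directly would reduce your proof to the paper's.

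One wording point in your limit passage: $\{h_n\ge\ell\}$ does not in general increase to $\{h>\ell\}$. Its union lies between $\{h>\ell\}$ and $\{h\ge\ell\}$. You can either work with strict level sets, for which $\{h_n>\ell\}\uparrow\{h>\ell\}$ holds exactly (as the paper does), or keep the sandwich and use that $\mu(h>\ell)=\mu(h\ge\ell)$ for all but countably many $\ell$. Your remark about countably many exceptional levels already supplies this, so it is a matter of phrasing, not a gap.
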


\begin{proof}
    Define the set function $v(A) = \mu(A)^\alpha$, $A\in\mathcal F$. We first prove that $v$ is submodular, i.e., \begin{equation}
    \label{eq::before_submodular}
    v(A) + v(B) \geq v(A\cup B) + v(A\cap B).
    \end{equation}
    Let $x = \mu(A\setminus B)$, $y=\mu(B\setminus A)$, and $z=\mu(A\cap B)$. Then \eqref{eq::before_submodular} is equivalent to \begin{equation}
    \label{eq::submodular}
    (z+x)^\alpha + (z+y)^\alpha \geq (z+x+y)^\alpha + z^\alpha.     
    \end{equation}
    For any $y,z>0$, consider the function $F(x) = (z+x)^\alpha + (z+y)^\alpha - (z+x+y)^\alpha - z^\alpha$. We then have $F(0)=0$ and $F'(x) = \alpha((z+x)^{\alpha - 1}-(z+x+y)^{\alpha - 1})\geq 0$ by $\alpha\in(0,1)$. Therefore, $F(x)\geq F(0)=0$, i.e., \eqref{eq::submodular} holds for any $x,y,z\geq 0$, implying that $v$ is submodular.

    Therefore, $\mathcal W_\alpha(f)=\int_0^\infty \mu(f\geq \ell)^\alpha\dd \ell = \int_0^\infty v(f\geq \ell)\dd\ell$ is the Choquet integral of $f$ respect to the submodular capacity $v$ and is sub-additive \cite{denneberg1994subadditivity}. Consequently, \[
    \int_0^\infty v((f+g)\geq \ell)\dd\ell \leq \int_0^\infty v(f\geq \ell)\dd\ell + \int_0^\infty v(g\geq \ell)\dd\ell
    \]
    for any measurable non-negative $f$ and $g$, which proves finite sub-additivity.

    We then use the monotone convergence theorem to prove countable sub-additivity. First, since the distribution function has at most countably many discontinuity levels, we have \[
    \mathcal W_\alpha(f) = \int_0^\infty \mu(f\geq \ell)^\alpha\dd \ell = \int_0^\infty \mu(f > \ell)^\alpha\dd \ell.
    \]
    For $S_n = \sum_{k=1}^n f_k\uparrow S$ and every $\ell > 0$, we have $\{S_n>\ell\}\uparrow\{S> \ell\}$. By continuity and monotonicity of $x^\alpha$ for $\alpha\in(0,1)$, we have $\mu(S_n>\ell)^\alpha\uparrow\mu(S>\ell)^\alpha$. Then the monotone convergence theorem gives \[
    \mathcal W_\alpha(S_n) = \int_0^\infty \mu(S_n> \ell)^\alpha \dd\ell \uparrow \int_0^\infty \mu(S> \ell)^\alpha\dd\ell = \mathcal W_\alpha(S).
    \]
    Therefore, \[
    \mathcal W_\alpha\left(\sum_{k=1}^\infty f_k\right) = \lim_{n\rightarrow \infty}\mathcal W_\alpha\left(\sum_{k=1}^n f_k\right) \leq \lim_{n\rightarrow \infty}\sum_{k=1}^n \mathcal W_\alpha(f_k) = \sum_{k=1}^\infty \mathcal W_\alpha(f_k).
    \] 
\end{proof}

Now we prove the lower bound on $L(t)$ based on this sub-additive property.
\begin{theorem}
    \label{thm::converse}
    Let $P\ll Q$. For any $t>0$ and $\alpha = 1/(1+t)\in(0,1)$, we have \[
    L(t) \geq R_{\alpha}(P\| Q).
    \]
\end{theorem}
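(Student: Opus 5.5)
By Campbell's bound \eqref{eq::lower_bd_h_alpha}, $L(t)\geq H_\alpha(K)$. It therefore suffices to show $R_\alpha(P\|Q)\leq H_\alpha(K)$, which is equivalent to
\[
\mathcal W_\alpha(R)\leq \sum_k p_k^\alpha .
\]
We may assume $\sum_k p_k^\alpha<\infty$, since otherwise the bound is trivial. The plan follows the outline given before Lemma~\ref{lemma::subadditivity}: decompose $R$ by output index, apply subadditivity, and bound each piece separately.

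\emph{Step 1 (decomposition).} For each $k$, define the measure $\nu_k(A):=\Pr(K=k,\,U_k\in A)$ on $\mathcal U$. Since $\nu_k(A)\leq \Pr(U_k\in A)=Q(A)$, we have $\nu_k\ll Q$, and the density $r_k:=\dd\nu_k/\dd Q$ satisfies $0\leq r_k\leq 1$ $Q$-a.s. Exactness gives
\[
P(A)=\Pr(U_K\in A)=\sum_k \nu_k(A)
\]
for every measurable $A$. By monotone convergence, $\sum_k r_k$ is then a density of $P$ with respect to $Q$, so $R=\sum_k r_k$ $Q$-a.s. Changing $R$ on a $Q$-null set does not affect $w_Q$. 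Finally, $\int r_k\,\dd Q=\nu_k(\mathcal U)=p_k$.

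\emph{Step 2 (subadditivity).} Apply Lemma~\ref{lemma::subadditivity} with $\mu=Q$ and $f_k=r_k$:
\[
\mathcal W_\alpha(R)\leq \sum_k \mathcal W_\alpha(r_k).
\]

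\emph{Step 3 (single-term bound).} This is the key inequality, though a short one. For $0\leq r\leq 1$ with $\int r\,\dd Q=p$, I claim $\mathcal W_\alpha(r)\leq p^\alpha$. Since $r\leq 1$, we have $Q(r\geq \ell)=0$ for $\ell>1$. Hence
\[
\mathcal W_\alpha(r)=\int_0^1 Q(r\geq\ell)^\alpha\,\dd\ell .
\]
The map $x\mapsto x^\alpha$ is concave, and $\dd\ell$ restricted to $[0,1]$ is a probability measure. Jensen's inequality therefore gives
\[
\int_0^1 Q(r\geq\ell)^\alpha\,\dd\ell\leq\left(\int_0^1 Q(r\geq\ell)\,\dd\ell\right)^\alpha=\left(\mathbb E_Q[r]\right)^\alpha=p^\alpha .
\]
This step is where the constraint $r_k\leq 1$ is essential. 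It holds because each index carries at most one $Q$-distributed sample, and it is what ties the width functional to the Rényi entropy of $K$.

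\emph{Step 4 (conclusion).} Combining Steps 2 and 3 gives $\mathcal W_\alpha(R)\leq \sum_k p_k^\alpha$. Take $\frac{1}{1-\alpha}\log$ of both sides; this is monotone increasing because $\alpha<1$. The result is $R_\alpha(P\|Q)\leq H_\alpha(K)\leq L(t)$.

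The only real obstacle is measure-theoretic care in Step 1. One must justify $\nu_k\ll Q$, which follows because $U_k\sim Q$ marginally regardless of the dependence between $K$ and the sequence. One must also justify the countable identity $R=\sum_k r_k$ a.s., which follows from countable additivity together with uniqueness of Radon--Nikodym derivatives.
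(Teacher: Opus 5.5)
Your proposal is correct and follows the paper's proof essentially step for step. Both decompose $R=\sum_k r_k$ with $0\le r_k\le 1$ and $\int r_k\,\dd Q=p_k$, apply the subadditivity of $\mathcal W_\alpha$ from Lemma~\ref{lemma::subadditivity}, use Jensen on $[0,1]$ to get $\mathcal W_\alpha(r_k)\le p_k^\alpha$, and finish with Campbell's bound $L(t)\ge H_\alpha(K)$.
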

\begin{proof}
    First, define $v_k(A) = \Prob(K=k, U_k\in A)$ for $A\in\mathcal F$. Since $U_K\sim P$, for any measurable $A\in\mathcal F$, \[
    P(A) = \Prob(U_K\in A) = \sum_{k=1}^\infty \Prob(K=k, U_k\in A) = \sum_{k=1}^\infty v_k(A).
    \]
    Hence, $P = \sum_{k=1}^\infty v_k$.
    Let $r_k(u)  = \dd v_k/\dd Q(u)$. 
    Since for any measurable $A\in\mathcal F$ \[
    v_k(A)  = \Prob(K=k, U_k\in A)\leq \Prob(U_k\in A) = Q(A),
    \]
    the ratio $0\leq r_k\leq 1$ a.e. on $Q$ and the likelihood ratio can be decomposed by \[
    R(u) = \frac{\dd P}{\dd Q}(u) = \frac{\sum_{k=1}^\infty \dd v_k}{\dd Q}(u) = \sum_{k=1}^\infty r_k(u).
    \]
    Moreover, for each $k$, we have \[
    \int r_k\dd Q = \int \dd v_k = v_k(\mathcal U) = \Prob(K=k) = p_k.
    \]

    Now apply the sub-additivity we proved in Lemma~\ref{lemma::subadditivity}. Letting \[
    \mu = Q, \qquad f = R, \qquad \text{and}\qquad f_k = r_k,
    \]
    \eqref{eq::subadditivity} gives the inequality \begin{align*}
    \int_0^\infty w_Q(\ell)^\alpha\dd\ell &= \int_0^\infty Q(R\geq \ell)^\alpha\dd\ell \\
    &= \mathcal W_\alpha(R) = \mathcal W_\alpha\left(\sum_{k=1}^\infty r_k\right) \\
    &\leq \sum_{k=1}^\infty \mathcal W_\alpha(r_k) = \sum_{k=1}^\infty\int_0^1 Q(r_k\geq \ell)^\alpha\dd\ell,
    \end{align*}
    and the concavity of $x^\alpha$ for $\alpha\in(0,1)$ and Jensen's inequality implies \[
    \int_0^1 Q(r_k\geq \ell)^\alpha \dd\ell \leq \left(\int_0^1 Q(r_k\geq \ell)\dd\ell\right)^\alpha = \left(\int r_k\dd Q\right)^\alpha = p_k^\alpha.
    \] 
    Therefore, $\int_0^\infty w_Q(\ell)^\alpha\dd\ell\leq \sum_{k=1}^\infty p_k^\alpha$. Taking $\frac{1}{1-\alpha}\log$ at both sides gives \[
    R_\alpha(P\| Q) = \frac{1}{1-\alpha}\log \int_0^\infty w_Q(\ell)^\alpha \dd \ell \leq \frac{1}{1-\alpha}\log \sum_{k=1}^\infty p_k^\alpha = H_\alpha(K).
    \]
\end{proof}

Now compare Theorem~\ref{thm::converse} with the lower bounds in \cite{hill2026communication}, where \[
L(t) \geq \max\{LB_1, LB_2\}
\]
with \begin{equation*}
\label{eq::exact_sampling_lb1}
    LB_1 = D_{1/\alpha}(P\| Q) + \frac{\alpha}{1-\alpha}\log \alpha - 1,
\end{equation*}
and \begin{equation*}
\label{eq::exact_sampling_lb2}
    LB_2 = D_{2-\alpha}(P\| Q) + \frac{1}{1-\alpha}\log \frac{1}{2-\alpha}.
\end{equation*}
Since \[
\frac{1}{\alpha} - (2-\alpha) = \frac{(1-\alpha)^2}{\alpha}\geq 0
\]
and R\'enyi divergence is non-decreasing in $\alpha$, we have \[
R_\alpha(P\| Q)\geq D_{1/\alpha}(P\| Q)\geq D_{2-\alpha}(P\| Q).
\]
Hence, Theorem~\ref{thm::converse} improves both $LB_1$ and $LB_2$ given in \cite{hill2026communication}.

\section{Achievability by Layered Sampling}
\label{sec::achievability}
Before discussing the achievability of $L(t)$, we first state the following lemma, which will be used in the later proof.
\begin{lemma}
\label{lemma::upper_bound_lt}
For $0<\alpha<1$, let $f:[0,\infty)\to[0,1]$ be non-increasing and satisfy $\int_0^\infty f(h)\dd h=1$.
Then for $m\in\mathbb N_0$, we have \[
\sum_{m\in\mathbb N_0} \left(\int_m^{m+1} f(h)\dd h\right)^\alpha \leq \frac{1}{\alpha}\int_0^\infty f(h)^\alpha\dd h.
\]
\end{lemma}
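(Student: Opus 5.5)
The plan is to compare each block mass $a_m:=\int_m^{m+1}f(h)\dd h$ with the contribution of $f^\alpha$ on $[m,m+1]$ and on the preceding block. Monotonicity of $f$ supplies both comparisons. Set $c_m:=f(m)$ and $r_m:=a_m/c_m\in[0,1]$, where the inequality $a_m\le c_m$ holds because $f$ is non-increasing (if $c_m=0$ then $a_m=0$ and the term vanishes). Write $I_m:=\int_m^{m+1}f^\alpha\dd h$. I would derive two lower bounds on $I_m$.

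The first bound uses $f^{\alpha-1}\ge c_m^{\alpha-1}$ on $[m,m+1]$, valid since $\alpha<1$ and $f\le c_m$ there:
\[
I_m=\int_m^{m+1} f\cdot f^{\alpha-1}\dd h\ \ge\ c_m^{\alpha-1}a_m=r_m c_m^\alpha .
\]
The second bound uses $f\ge f(m+1)$ on $[m,m+1]$:
\[
I_m\ \ge\ c_{m+1}^\alpha .
\]
Taking the convex combination with weight $\alpha$ gives $I_m\ge \alpha r_m c_m^\alpha+(1-\alpha)c_{m+1}^\alpha$. Summing over $m$ and reindexing the second term yields
\[
\frac1\alpha\int_0^\infty f^\alpha\dd h\ \ge\ \sum_{m\ge0}\Bigl(r_m+\tfrac{1-\alpha}{\alpha}\mathds 1(m\ge1)\Bigr)c_m^\alpha .
\]
Since $a_m^\alpha=r_m^\alpha c_m^\alpha$, for every $m\ge1$ it suffices to check the scalar inequality $r^\alpha\le r+\frac{1-\alpha}{\alpha}$ for $r\in[0,1]$. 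This holds because $\max_{r\in[0,1]}(r^\alpha-r)=(1-\alpha)\alpha^{\alpha/(1-\alpha)}\le 1-\alpha\le\frac{1-\alpha}{\alpha}$.

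The main obstacle is the block $m=0$, which has no preceding interval to borrow from. There the inequality $r_0^\alpha\le r_0$ fails whenever $r_0<1$, so the normalization $\int_0^\infty f=1$ together with $f\le1$ must be used. My first attempt would be to spend the slack from the second part of the bound at $m=0$ on the first block. Because $f\le 1$, we have $c_0\le 1$, and $a_0=1-\sum_{m\ge1}a_m$, so a small $a_0$ forces the tail blocks to carry almost all the mass. Those blocks then have surplus, since the check for $m\ge1$ leaves room whenever $r_m$ is not at the maximizer.

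If this bookkeeping becomes messy, I would switch to a layer-cake representation. Write $x(s):=\lambda\{f\ge s\}$ and $y_m(s):=\min\{(x(s)-m)_+,1\}$, so that $\sum_m y_m=x$ and $a_m=\int_0^1 y_m\dd s$. This gives
\[
\frac1\alpha\int_0^\infty f^\alpha\dd h=\int_0^1 x(s)\,s^{\alpha-1}\dd s=\sum_m\int_0^1 y_m(s)s^{\alpha-1}\dd s .
\]
With this, the $m=0$ term can be handled by a bathtub/Chebyshev argument on the non-increasing function $y_0$. For this purpose $y_0$ takes the value $1$ on $[0,f(1))$, so it has at least that much full-height mass, and the weight $s^{\alpha-1}$ dominates $1$ on $[0,1]$. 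I would then redistribute the leftover weight to the deficient blocks.
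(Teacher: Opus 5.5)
Your two local bounds, $I_m\ge r_mc_m^\alpha$ and $I_m\ge c_{m+1}^\alpha$, are correct, and $r^\alpha\le r+\frac{1-\alpha}{\alpha}$ does close every block $m\ge1$. The gap is the block $m=0$. You flag it as the main obstacle but never close it, and the layer-cake route, as you describe it, cannot close it.

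The first block cannot pay for itself: $a_0^\alpha\le\frac{1}{\alpha}I_0$ fails in general. Take $f=1$ on $[0,\epsilon]$ and $f=\epsilon^2$ on $(\epsilon,L]$, with $L$ fixed by $\int_0^\infty f\dd h=1$. Then $a_0^\alpha\ge\epsilon^\alpha$, while $\frac{1}{\alpha}I_0\le(\epsilon+\epsilon^{2\alpha})/\alpha=o(\epsilon^\alpha)$ as $\epsilon\to0$. Since $\int_0^1 y_0(s)s^{\alpha-1}\dd s=\frac{1}{\alpha}I_0$, no bathtub or Chebyshev argument on $y_0$ alone can handle the $m=0$ term, and there is no leftover weight there to redistribute. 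Indeed, $s^{\alpha-1}\ge1$ only gives $\int_0^1 y_0(s)s^{\alpha-1}\dd s\ge a_0$, and $a_0\le a_0^\alpha$ points the wrong way.

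Mass has to flow from the tail into the first block, and that transfer must use $\int f=1$. Your first plan points this way but contains no estimate. Also, the delicate regime is $a_0$ close to $1$, where the deficit and the tail mass vanish at the same rate, not small $a_0$.

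The missing ingredient is the global bound the paper uses. Since $f\le c_0\le1$ and $\int_0^\infty f\dd h=1$, we have $\int_0^\infty f^\alpha\dd h\ge c_0^{\alpha-1}\int_0^\infty f\dd h=c_0^{\alpha-1}\ge c_0^\alpha$. To use it, do not divide by $\alpha$ immediately. Split $\frac{1}{\alpha}\int f^\alpha=\int f^\alpha+\frac{1-\alpha}{\alpha}\int f^\alpha$. Bound the first part below by your convex combination $\alpha\sum_m r_mc_m^\alpha+(1-\alpha)\sum_{m\ge1}c_m^\alpha$, and the second part below by $(1-\alpha)c_0^\alpha$. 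The tangent-line inequality $r^\alpha\le\alpha r+1-\alpha$ then closes every block, including $m=0$.

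This matches the paper's structure. In your notation, the paper proves $a_m^\alpha\le\alpha I_m+(1-\alpha)c_m^\alpha$, absorbs the remainders for $m\ge1$ via $c_m\le a_{m-1}$, and uses the global bound for $c_0$. Your comparison $c_m^\alpha\le I_{m-1}$ is sharper than $c_m\le a_{m-1}$, because $I_{m-1}\le a_{m-1}^\alpha$. With this fix it even yields the constant $2-\alpha\le 1/\alpha$.

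Your first plan can also be completed. Using $r^\alpha-r\le(1-\alpha)(1-r)$ and $c_m\le c_1\le a_0$, the surplus from the blocks $m\ge1$ is at least $\frac{1-\alpha}{\alpha}\sum_{m\ge1}a_mc_m^{\alpha-1}\ge\frac{1-\alpha}{\alpha}a_0^{\alpha-1}(1-a_0)$. The $m=0$ deficit is at most $a_0^\alpha-a_0\le(1-\alpha)(1-a_0)$, which is smaller. The proposal, however, gives no such estimate.
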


\begin{proof}
For $g:[0,1]\to[0,a]$, we first prove that \begin{equation}
\label{eq::one_block}
\left(\int_0^1 g(u)\dd u\right)^\alpha \leq \alpha\int_0^1 g(u)^\alpha\dd u + (1-\alpha)a^\alpha.
\end{equation}
If $a=0$, then $g=0$ and \eqref{eq::one_block} is immediate. Hence we assume $a>0$.

By the concavity of $x\mapsto x^\alpha$, for $x\leq a$, we have $x^\alpha-a^\alpha\leq \alpha a^{\alpha - 1}(x-a)$. Letting $x=\int_0^1 g(u)\dd u\leq a$, \[
\left(\int_0^1 g(u)\dd u\right)^\alpha \leq a^\alpha+\alpha a^{\alpha-1}\left(\int_0^1 g(u)\dd u-a\right) =
\alpha a^{\alpha-1}\int_0^1 g(u)\dd u+(1-\alpha)a^\alpha.
\]
Also, because $0\leq g(u)\leq a$ and $\alpha-1<0$, \[
g(u)^\alpha = g(u)^{\alpha - 1}\cdot g(u) \geq a^{\alpha-1}g(u).
\]
Integrating both sides gives \[
\int_0^1 g(u)^\alpha\dd u \geq a^{\alpha-1}\int_0^1 g(u)\dd u.
\]
Substituting this into the concavity bound above proves \eqref{eq::one_block}. 

Now apply \eqref{eq::one_block} to $g(u):=f(m+u)$ with $u\in[0,1]$.
Denote $a_m:=\sup_{h\in[m,m+1)} f(h)=f(m).$
In this case, \[
\int_0^1 g(u)\dd u = \int_0^1 f(m+u)\dd u = \int_m^{m+1} f(h)\dd h
\]
and, by \eqref{eq::one_block}, \begin{equation}
\label{eq::m_block}
\left(\int_m^{m+1} f(h)\dd h\right)^\alpha \leq \alpha\int_m^{m+1} f(h)^\alpha\dd h +
(1-\alpha)a_m^\alpha .
\end{equation}

Since $f$ is non-increasing, $f(h)\geq a_m$ for $h\in[m-1,m)$ a.e., implying that $\int_{m-1}^{m}f(h)\dd h \geq a_m$. 
Hence, for every $m\geq 1$, \begin{equation}
\label{eq::m_recursion}
\left(\int_m^{m+1} f(h)\dd h\right)^\alpha \leq \alpha\int_m^{m+1}f(h)^\alpha\dd h + (1-\alpha)\left(\int_{m-1}^{m}f(h)\dd h\right)^\alpha .
\end{equation}

For $m=0$, \eqref{eq::m_block} gives \begin{equation}
\label{eq::0_recursion}
\left(\int_0^1 f(h)\dd h\right)^\alpha \leq \alpha\int_0^1 f(h)^\alpha\dd h + (1-\alpha)a_0^\alpha .
\end{equation}

Summing \eqref{eq::0_recursion} and \eqref{eq::m_recursion} over $m=1,\ldots,M$, we obtain \[
\sum_{m=0}^M \left(\int_m^{m+1}f(h)\dd h\right)^\alpha \leq \alpha\int_0^{M+1}f(h)^\alpha\dd h + (1-\alpha)a_0^\alpha + (1-\alpha)\sum_{m=0}^{M-1}\left(\int_{m}^{m+1} f(h)\dd h\right)^\alpha.
\]
Moving the last sum to the left, we have \begin{equation}
\label{eq::finit_M}
\alpha\sum_{m=0}^M \left(\int_m^{m+1}f(h)\dd h\right)^\alpha + (1-\alpha)\left(\int_M^{M+1}f(h)\dd h\right)^\alpha \leq \alpha\int_0^{M+1}f(h)^\alpha\dd h + (1-\alpha)a_0^\alpha.
\end{equation}

It remains to control the initial term $a_0^\alpha$. Since $f$ is non-increasing and $a_0$ is the supremum of $f$ on the first interval $[0,1)$, we have $0\leq f\leq a_0\leq 1$ on $[0,\infty)$. Since $\alpha-1<0$, \[
f(h)^\alpha = f(h)f(h)^{\alpha-1} \geq f(h)a_0^{\alpha-1}.
\]
Therefore, \begin{equation}\label{eq::control_a_0}
\int_0^\infty f(h)^\alpha\dd h \geq a_0^{\alpha-1}\int_0^\infty f(h)\dd h = a_0^{\alpha-1}\geq a_0^\alpha,
\end{equation}
where the last inequality holds by $0<a_0\leq 1$.

Substituting \eqref{eq::control_a_0} to \eqref{eq::finit_M} and dropping the non-negative term $(1-\alpha)(\int_M^{M+1}f(h)\dd h)^\alpha$, \[
\alpha\sum_{m=0}^M \left(\int_m^{m+1}f(h)\dd h\right)^\alpha \leq \alpha\int_0^{M+1}f(h)^\alpha\dd h + (1-\alpha)\int_0^\infty f(h)^\alpha\dd h \leq \int_0^\infty f(h)^\alpha\dd h.
\]
Letting $M\to\infty$ and using monotone convergence of the partial sums gives \[
\sum_{m\in\mathbb N_0} \left(\int_m^{m+1}f(h)\dd h\right)^\alpha \leq \frac{1}{\alpha} \int_0^\infty f(h)^\alpha\dd h.
\]
\end{proof}

We now introduce a new layered sampling scheme. The main idea is to turn the likelihood ratio \(R=\mathrm dP/\mathrm d Q\) into a collection of bounded acceptance functions. We decompose \(R\) into unit-height layers \(g_m\), each satisfying \(0\leq g_m\leq 1\), so that \(g_m(x)\) can be interpreted directly as an acceptance probability. For each layer \(m\), we use an independent marked Poisson process whose marks are \(Q\)-distributed proposals. Then the superposition of all accepted processes has intensity
\[
 \sum_{m}g_m(\mathrm dx)Q(\mathrm dx)\dd t= R(\mathrm dx)Q(\mathrm dx)\dd t=P(\mathrm dx)\dd t.
\]
Thus, selecting the earliest accepted point across all layers produces an exact \(P\)-distributed sample. Moreover, the total acceptance rate of each layer is determined by the corresponding portion of the width function \(w_Q\), which will allow us to control the R\'enyi entropy of the selected index directly in terms of \(R_\alpha(P\|Q)\). The details are as follows.

Recall that $R(x)=\frac{\dd P}{\dd Q}(x)$ where $P\ll Q$ on a measurable space $\mathcal X$.
For each $m\in\mathbb N_0$, define the $m$-th function \begin{align*}
        g_m(x) &:= \int_m^{m+1}\mathds{1}(R(x)\geq h)\dd h =\begin{cases}
            0, & R(x)<m\\
            R(x) - m, &m\leq R(x)\leq m+1\\
            1, &R(x)>m+1
        \end{cases} \\
        &= \min\left\{\left(R(x) - m\right)_+, 1\right\}.
\end{align*}
Thus, $0\leq g_m(x)\leq 1$ and
\[
        \sum_{m=0}^\infty g_m(x)=\sum_{m=0}^\infty \int_m^{m+1}\mathds{1}(R(x)\geq h)\dd h = \int_0^\infty \mathds{1}(R(x)\geq h)\dd h = \int_0^{R(x)}1\dd h = R(x).
\]

Then for each $m=0,1,\ldots$, define \[
        q_m:=\int g_m(x)\dd Q.
\]
By Fubini's theorem, we have \[
        q_m = \int \int_m^{m+1} \mathds1(R(x)\geq h)\dd h\dd Q = \int_m^{m+1} \int \mathds1(R(x)\geq h)\dd Q\dd h = \int_m^{m+1}Q(R(X)\geq h)\dd h = \int_m^{m+1}w_Q(h)\dd h.
\]
Also, since $g_m(x)\geq 0$, applying monotone convergence theorem to $\sum_{m=0}^N g_m(x)$, \[
    \sum_{m=0}^\infty q_m = \int \sum_{m=0}^\infty g_m(x)\dd Q = \int R(x)\dd Q = 1.
\]

Fix a bijection $\varphi:\mathbb N_0\times\mathbb N_1\to\mathbb N_1$. For each $m\ge 0$, independently generate a rate-one Poisson process with ordered arrival times
\[
\{T_{m,n}\}_{n\in\mathbb N_1},
\]
and attach to its $n$-th point an independent mark $(X_{m,n},V_{m,n})$, where $X_{m,n}\sim Q$ and $V_{m,n}\sim \mathrm{Unif}(0,1)$.

The Poisson processes are independent across $m$, all marks ${(X_{m,n},V_{m,n})}_{m,n}$ are mutually independent, and the marks are independent of all the Poisson processes. Thus, for each fixed $m$, $\sum_{n\ge1}\delta_{(X_{m,n},T_{m,n})}$ is a marked Poisson point process on $\mathcal X\times\mathbb R_+$ with intensity $Q(\mathrm{d}x)\dd t$.

Define \[
    U_{\phi(m,n)}:=X_{m,n}.
\]
Since $\varphi$ is a bijection and the marks $X_{m,n}$ are i.i.d. according to $Q$, the sequence $(U_k)_{k\in\mathbb N_1}$ is i.i.d. $Q$. For each $m\in\mathbb N_0$, the $n$-th candidate in layer $m$ is accepted whenever \[
    V_{m,n}\le g_m(X_{m,n}).
\]

Among all accepted candidates over all $m\in\mathbb N_0$, choose the accepted
candidate with the smallest Poisson time. That is, \[
(M,N) = \argmin_{(m,n)\in\mathbb N_0\times\mathbb N_1}\{T_{m,n}: V_{m,n}\leq g_m(X_{m,n})\}.
\] 
The encoder communicates the global index
\[
        K=\varphi(M,N),
\]
and the decoder outputs
\[
        U_K=X_{M,N}.
\]

\begin{theorem}
For the layered sampling scheme above, fix $0<\alpha<1$ and assume
\[
        \int_0^\infty w_Q(h)^\alpha \dd h<\infty.
\]
Let
\[
        t=\frac{1-\alpha}{\alpha}.
\]
Then the scheme is exact, i.e., $U_K\sim P$, and there exists a prefix code for $K$ satisfying
\[
        L(t) < H_{\alpha}(K) + 1 \leq R_\alpha(P\| Q) + \frac{1}{1-\alpha} \log\frac{1}{\alpha(2^\alpha-1)} + 1.
\]
\end{theorem}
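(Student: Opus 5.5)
The proof has three parts. First, exactness follows from Poisson thinning and superposition. Second, I will compute the law of $K=\varphi(M,N)$ in closed form. Third, I will bound its $\alpha$-power sum using Lemma~\ref{lemma::upper_bound_lt}. The prefix code then comes directly from Campbell's bound \eqref{eq::upper_bd_h_lapha}, once $H_\alpha(K)<\infty$ has been established.

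\emph{Exactness.} For fixed $m$, independently mark each point of the layer-$m$ process as accepted when $V_{m,n}\le g_m(X_{m,n})$. By the marking (coloring) theorem, the accepted points form a Poisson process on $\mathcal X\times\mathbb R_+$ with intensity $g_m(x)Q(\dd x)\dd t$. The rejected points form an independent Poisson process with intensity $(1-g_m(x))Q(\dd x)\dd t$. The layers are independent, so the superposition of all accepted processes is Poisson with intensity $\sum_m g_m(x)Q(\dd x)\dd t=R(x)Q(\dd x)\dd t=P(\dd x)\dd t$. This has total rate $1$, so a first point exists almost surely, and its mark has law $P$. Hence $U_K=X_{M,N}\sim P$. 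The sequence $(U_k)$ is i.i.d.\ $Q$ because $\varphi$ is a bijection.

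\emph{Law of $K$.} The accepted process of layer $m$ has rate $q_m$. The first accepted time overall is $T\sim\mathrm{Exp}(1)$, and $\Prob(M=m)=q_m$. Given $M=m$ and $T=s$, the integer $N-1$ counts the rejected layer-$m$ points in $[0,s)$. By the independence of the accepted and rejected processes, this count is $\mathrm{Poisson}((1-q_m)s)$. The joint density of the event $M=m$ and $T=s$ is $q_m e^{-s}$. Integrating over $s$ should give
\[
p_{m,n}:=\Prob(M=m,N=n)=q_m\int_0^\infty e^{-s}e^{-(1-q_m)s}\frac{((1-q_m)s)^{n-1}}{(n-1)!}\dd s=\frac{q_m(1-q_m)^{n-1}}{(2-q_m)^{n}}.
\]
I expect this step to be the main obstacle. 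It requires carefully justifying the conditional independence: the rejected points of layer $m$ are independent of all accepted processes and hence of $(M,T)$.

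\emph{Bounding the power sum.} Summing the geometric series gives
\[
\sum_n p_{m,n}^\alpha=\frac{q_m^\alpha}{(2-q_m)^\alpha-(1-q_m)^\alpha}.
\]
The denominator has derivative $\alpha\big((1-q)^{\alpha-1}-(2-q)^{\alpha-1}\big)\ge0$ in $q$, so it is at least its value at $q=0$, namely $2^\alpha-1$. Therefore $\sum_k p_k^\alpha\le (2^\alpha-1)^{-1}\sum_m q_m^\alpha$. Recall that $q_m=\int_m^{m+1}w_Q(h)\dd h$, and that $w_Q$ is non-increasing, $[0,1]$-valued, and integrates to $1$. Lemma~\ref{lemma::upper_bound_lt} then gives $\sum_m q_m^\alpha\le \alpha^{-1}\int_0^\infty w_Q^\alpha\dd h<\infty$. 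Consequently $H_\alpha(K)$ is finite, so Campbell's theorem yields a prefix code with $L(t)<H_\alpha(K)+1$. Finally, applying $\frac{1}{1-\alpha}\log$ gives
\[
H_\alpha(K)\le R_\alpha(P\|Q)+\frac{1}{1-\alpha}\log\frac{1}{\alpha(2^\alpha-1)},
\]
which completes the chain of inequalities in the statement.
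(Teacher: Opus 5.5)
Your proposal is correct and follows the paper's proof essentially step for step: exactness via thinning and superposition, the closed form $\Pr(M=m,N=n)=q_m(1-q_m)^{n-1}/(2-q_m)^n$, the geometric-series sum with its denominator bounded below by $2^\alpha-1$, and Lemma~\ref{lemma::upper_bound_lt} applied to $w_Q$. The only differences are minor. You derive the law of $(M,N)$ by conditioning on the first accepted time and integrating a $\mathrm{Poisson}((1-q_m)s)$ count of rejected layer-$m$ points, whereas the paper superposes the accepted-$m$, rejected-$m$ and accepted-other processes (total rate $2-q_m$) and reads off the types of successive points; you also state explicitly that $H_\alpha(K)<\infty$, which Campbell's theorem requires and the paper leaves implicit.
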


\begin{proof}
We first argue the exactness that $U_K\sim P$. 
For each layer $m$, Poisson thinning gives that the accepted points form a Poisson point process on $\mathcal X\times\mathbb R_+$ with intensity $g_m(x)Q(\dd x)\dd t$.
Since the layers are independent, their superposition is a Poisson point process with intensity \[
    \sum_{m=0}^\infty g_m(x)Q(\mathrm{d} x)\dd t = R(x)Q(\mathrm{d} x)\dd t = P(\mathrm{d} x)\dd t.
\]
Therefore, the mark of the first accepted point has distribution $P$. Since the scheme outputs exactly this mark, we obtain $U_K\sim P$, proving exactness.

Next, we compute the distribution of the selected pair $(M,N)$. For each
$m\in\mathbb N_0$, Poisson thinning gives
\begin{align*}
\text{rate of accepted proposals in layer }m
&= q_m,\\
\text{rate of rejected proposals in layer }m
&= 1-q_m,\\
\text{rate of accepted proposals from layers }j\, (j\neq m)
&= \sum_{j\neq m}q_j =1-q_m.
\end{align*}
Therefore, by Poisson superposition, the total rate of these three independent processes is $q_m+(1-q_m)+(1-q_m)=2-q_m$.
Hence, each point of the superposed process is a rejected proposal from
layer $m$ with probability $(1-q_m)/(2-q_m)$ and is an accepted proposal from layer $m$ with probability $q_m/(2-q_m)$.

The event $\{M=m,N=n\}$ occurs when the first $n-1$ points of this
superposed process are rejected proposals from layer $m$ and the
$n$-th point is an accepted proposal from layer $m$. Therefore,
\[
\Pr(M=m,N=n) =
\left(\frac{1-q_m}{2-q_m}\right)^{n-1}\cdot \frac{q_m}{2-q_m} =
\frac{q_m(1-q_m)^{n-1}}{(2-q_m)^n}.
\]
Consequently,
\[
\Pr(K=\varphi(m,n))
=
\frac{q_m(1-q_m)^{n-1}}{(2-q_m)^n}.
\]

Since $\varphi$ is a bijection, \begin{align*}
        \sum_{k\geq 1}\Prob (K=k)^\alpha &= \sum_{m\in \mathbb N_0}\sum_{n\in \mathbb N_1} \Prob (M=m,N=n)^\alpha\\
        &= \sum_m \sum_n \left(q_m \frac{(1-q_m)^{n-1}}{(2-q_m)^n}\right)^\alpha\\
        &= \sum_m q_m^\alpha \sum_n \frac{(1-q_m)^{\alpha(n-1)}}{(2-q_m)^{\alpha n}}\\
        &= \sum_m q_m^\alpha \frac{(2-q_m)^{-\alpha}}{1-\left(\frac{1-q_m}{2-q_m}\right)^\alpha}\\
        &= \sum_m q_m^\alpha \frac{1}{(2-q_m)^\alpha-(1-q_m)^\alpha}\\
        &\leq \sum_m q_m^\alpha\cdot \frac{1}{2^\alpha - 1}.
\end{align*}
The last inequality holds since for $0\le q\le1$, the function \[
        q\mapsto (2-q)^\alpha-(1-q)^\alpha
\] is non-decreasing: its derivative is \[
        \alpha\bigl((1-q)^{\alpha-1}-(2-q)^{\alpha-1}\bigr)\geq 0 .
\]
Therefore, we have $(2-q)^\alpha-(1-q)^\alpha \geq 2^\alpha-1$. 

By Lemma~\ref{lemma::upper_bound_lt}, if
$f:[0,\infty)\to[0,1]$ is non-increasing and $\int_0^\infty f(h)\dd h=1$,
then for every $0<\alpha<1$, \[
        \sum_{m=0}^\infty \left(\int_m^{m+1}f(h)\dd h\right)^\alpha \leq \frac1\alpha \int_0^\infty f(h)^\alpha\dd h .
\]
Applying this inequality to $f=w_Q$ gives \[
        \sum_{m\geq 0}q_m^\alpha \leq \frac1\alpha \int_0^\infty w_Q(h)^\alpha\dd h.
\]
Consequently, \[
        \sum_{k\geq 1}\Prob (K=k)^\alpha \leq \frac{1}{\alpha(2^\alpha-1)} \int_0^\infty w_Q(h)^\alpha\dd h.
\]
Substituting this into $L(t) < H_\alpha(K)+1$, we obtain the result \begin{align*}
L(t) < \frac{1}{1-\alpha}\log \sum_k p_k^\alpha + 1 &\leq \frac{1}{1-\alpha}\log \left(\frac{1}{\alpha(2^\alpha-1)} \int_0^\infty w_Q(h)^\alpha\dd h\right) + 1 \\
&= R_\alpha(P\| Q) + \frac{1}{1-\alpha}\log \frac{1}{\alpha(2^\alpha - 1)} + 1.
\end{align*}

\end{proof}

Combining the converse and achievability results gives an optimal characterization of $L(t)$ up to a constant for general $t>0$ \[
    R_\alpha(P\|Q) \leq L(t) < R_\alpha(P\|Q) + \frac{1}{1-\alpha}\log \frac{1}{\alpha(2^\alpha - 1)} + 1,
\]
where $\alpha = 1/(1+t)$.

\begin{remark}
    Considering the limits of this upper bound as $\alpha\uparrow 1$ and $\alpha\downarrow 0$, we have \[
        \lim_{\alpha\uparrow1}\left[\frac{1}{1-\alpha}\log\frac{1}{\alpha(2^\alpha-1)}\right] = \log e+2,
\]
 and \[
    \lim_{\alpha\downarrow 0}\left[\frac{1}{1-\alpha}\log\frac{1}{\alpha(2^\alpha-1)}\right] = \infty.
 \]

 When $\alpha\uparrow 1$, i.e., $t\downarrow 0$, the additive gap between the converse and the layered-sampling achievable bound is bounded as
 \[
 0\leq L(t)-R_\alpha(P\|Q)<\log e+3+o(1),
 \]
 where the additional one bit comes from Campbell coding. Under the integrability condition in Proposition~1, $R_\alpha(P\|Q)\to\dcs(P\|Q)$, so the bound reduces to a constant-gap characterization in the expected length $\mathbb E[\ell(C(K))]$ as limit $t\downarrow 0$, which is consistent with known results in \cite{goc2024channel}, \cite{hill2026rejection}, and \cite{hill2026communication}.


\end{remark}

\section{Asymptotic Analysis}
\label{sec::asymptotic_redundancy}

In this section we consider the extension from the one-shot exact sampling to the asymptotic problem, where $\{U_i\}_{i\geq 1}\sim Q^{\otimes n}$ i.i.d. and the selected sample follows the product distribution $P^{\otimes n}$ by communicating a single message. The previous sections have established the result \[
    \frac{R_\alpha(P^{\otimes n}\| Q^{\otimes n})}{n} \leq \frac{L_n(t)}{n} \leq \frac{R_\alpha(P^{\otimes n}\| Q^{\otimes n})}{n}  + \frac{1}{n(1-\alpha)}\log \frac{1}{\alpha(2^\alpha - 1)} + \frac{1}{n}.
\]
Therefore \[
\lim_{n\rightarrow \infty}\frac{L_n^*(t) - R_\alpha(P^{\otimes n}\| Q^{\otimes n})}{n} = 0.
\]
The rest of this section focuses on proving \[
R_\alpha(P^{\otimes n}\|Q^{\otimes n}) - D_{1/\alpha}(P^{\otimes n}\|Q^{\otimes n}) = \frac{1}{2}\log n + O(1)
\]
under certain moment conditions, implying the limit \[
\lim_{n\rightarrow \infty}\frac{L_n^*(t)}{n} = \lim_{n\to \infty}\frac{R_\alpha(P^{\otimes n}\| Q^{\otimes n})}{n} = \lim_{n\rightarrow \infty} \left(\frac{D_{1/\alpha}(P^{\otimes n}\|Q^{\otimes n})}{n} + \frac{\log n}{2n} + \frac{O(1)}{n}\right) = D_{1/\alpha}(P\| Q).
\]

Recall that, in Proposition~\ref{prop::renyi_width_gap}, we used the tilted measure $Q_{1/\alpha}$ defined by \[
  \frac{\dd Q_{1/\alpha}}{\dd Q}(x) = \frac{R(x)^{1/\alpha}}{\mathbb{E}_Q[R(X)^{1/\alpha}]}.
\]
As $\alpha$ tends to $1$, $Q_{1/\alpha}$ tends to $P$ instead of $Q$.
In the rest of this section, we write \[
  \mu_\alpha := \mathbb{E}_{Q_{1/\alpha}}[\imath(X)], \qquad V_\alpha := \mathrm{Var}_{Q_{1/\alpha}}[\imath(X)].
\]

\begin{remark}
Consider a channel \(P_{XY}\). For \(P_Y\)-a.e. \(y\) for which \(P_{X|Y=y}\ll P_X\), set $P=P_{X|Y=y}$ and $Q=P_X$.
At \(\alpha=1\), \(Q_1 = P\), and therefore \[
V_1(y) = \Var_{P_{X|Y=y}} \left[\ln\frac{dP_{X|Y=y}}{dP_X}(X)\right].
\]
Thus \(V_1(y)=0\) when the likelihood ratio is \(P_{X|Y=y}\)-a.s. constant. Also, if \(V_1(y)=0\) for \(P_Y\)-almost every \(y\), then $\frac{dP_{X|Y}}{dP_X}(X)$ is a deterministic function of \(Y\), which is the singular channel condition. 

Define \[
S_n = \sum_{i=1}^n \imath(X_i), \qquad X_1,\ldots,X_n\stackrel{\rm i.i.d.}{\sim}Q_{1/\alpha}.
\]
For the fixed pair $(P,Q)$ considered in this section, $V_\alpha>0$ plays the same non-degeneracy role under the tilted measure $Q_{1/\alpha}$. Together with the finite moment assumptions, it guarantees that the sum $S_n$ has non-degenerate fluctuations of order $\sqrt n$, which leads to the $\frac{1}{2}\log n$ term in the R\'enyi-entropy analysis below. When the corresponding variance vanishes, the log-likelihood ratio is constant on its support and this fluctuation disappears, which is consistent with the lower redundancy for singular channels.
\end{remark}

Applying Proposition~\ref{prop::renyi_width_gap} to $(P^{\otimes n},Q^{\otimes n})$ with information density $\imath_n(x^n) = \sum_{i=1}^n \imath(x_i)$ and $Q_{1/\alpha}^{\otimes n}$, directly implies
\begin{equation}
\label{eq:Gn}
  R_\alpha\left(P^{\otimes n}\|Q^{\otimes n}\right) - D_{1/\alpha}\left(P^{\otimes n}\|Q^{\otimes n}\right)
  = h_\alpha(G_{\alpha,n}) + \frac{\alpha}{1-\alpha}\log\alpha,
  \qquad
  G_{\alpha,n} = S_n - \alpha E,
\end{equation}
$\alpha E\sim \mathrm{Exp}(1/\alpha)$ is independent of $S_n$.

The next lemma bounds R\'enyi entropy of order $\alpha$ from two sides: how tall the density is bounds the width from below, and how fast the density decays bounds it from above.

\begin{lemma}
\label{lem:bounds}
Let $0 < \alpha < 1$. Suppose a random variable $U$ has density $f$. Then the following hold:
\begin{enumerate}
\item[(i)] $h_\alpha(U) \geq -\log \|f\|_\infty$.
\item[(ii)] If $\Prob(|U| > t) \le C t^{-\gamma}$ for some $C>0$, $\gamma > \frac{1-\alpha}{\alpha}$, and every $t \geq 1$, then \[
  h_\alpha(U) \leq \frac{1}{1-\alpha}\log\left(2^{1-\alpha}\left(1 + \frac{C^\alpha}{1-2^{1-\alpha-\alpha\gamma}}\right)
  \right).
\]
\end{enumerate}
\end{lemma}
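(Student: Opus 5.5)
For part (i), the plan is a one-line pointwise bound. Since $0<\alpha<1$, we have $f^\alpha = f\cdot f^{\alpha-1}\ge f\,\|f\|_\infty^{\alpha-1}$ wherever $f>0$. Integrating gives $\int f^\alpha \ge \|f\|_\infty^{\alpha-1}$. Applying $\frac{1}{1-\alpha}\log$ to both sides yields $h_\alpha(U)\ge -\log\|f\|_\infty$. If $\|f\|_\infty=\infty$ the bound is trivial.

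For part (ii), I would split $\mathbb R$ dyadically into $A_0=[-1,1]$ and $A_j=\{2^{j-1}<|u|\le 2^j\}$ for $j\ge1$. On each piece apply H\"older's inequality (equivalently, Jensen for the concave map $x\mapsto x^\alpha$ on a set of finite Lebesgue measure):
\[
\int_{A_j} f^\alpha \le \lambda(A_j)^{1-\alpha}\Bigl(\int_{A_j} f\Bigr)^\alpha .
\]
For $j=0$ this gives at most $2^{1-\alpha}\cdot 1$. For $j\ge1$, we have $\lambda(A_j)=2^j$, and by the tail hypothesis at $t=2^{j-1}\ge1$,
\[
\int_{A_j} f\le \Prob(|U|>2^{j-1})\le C\,2^{-\gamma(j-1)} .
\]
Hence
\[
\int_{A_j} f^\alpha\le 2^{j(1-\alpha)}C^\alpha 2^{-\alpha\gamma(j-1)}.
\]

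Summing over $j\ge1$ then gives
\[
\int_{\mathbb R} f^\alpha \le 2^{1-\alpha}\Bigl(1+C^\alpha\sum_{j\ge1}2^{(j-1)(1-\alpha-\alpha\gamma)}\Bigr)=2^{1-\alpha}\Bigl(1+\frac{C^\alpha}{1-2^{1-\alpha-\alpha\gamma}}\Bigr).
\]
Here I factor $2^{j(1-\alpha)}=2^{1-\alpha}2^{(j-1)(1-\alpha)}$, and the geometric series converges precisely because $\gamma>\frac{1-\alpha}{\alpha}$. Taking $\frac{1}{1-\alpha}\log$ gives the claimed bound. I would also check that the factor $2^{1-\alpha}$ can be pulled out of the $j=0$ term: using the crude bound $\int_{A_0}f\le1$ there, the pieces combine exactly into the stated expression.

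The main obstacle is only bookkeeping of the exponents so that the constant matches the stated form. The dyadic choice, with the $j$-th shell of length $2^j$ and its tail evaluated at $2^{j-1}$, is what produces exactly $2^{1-\alpha}$ in front and the ratio $2^{1-\alpha-\alpha\gamma}$ in the geometric series.
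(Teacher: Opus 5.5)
Your proof is correct and follows essentially the same argument as the paper. Part (i) is the same pointwise bound $f^\alpha \ge \|f\|_\infty^{\alpha-1} f$. Part (ii) uses the same H\"older-on-dyadic-shells decomposition; your shell $A_j=\{2^{j-1}<|u|\le 2^j\}$ is the paper's $\{2^{j}<|u|\le 2^{j+1}\}$ with the index shifted by one, so the constant $2^{1-\alpha}\bigl(1+C^\alpha/(1-2^{1-\alpha-\alpha\gamma})\bigr)$ comes out identically.
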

 
\begin{proof}
Both parts use $\int f = 1$ and H\"older's inequality: for any measurable $A \subseteq \mathbb{R}$, \begin{equation}
\label{eq:holderA}
  \int_A f^{\alpha}\dd u \leq |A|^{1-\alpha}\left(\int_A f\dd u\right)^{\alpha}.
\end{equation}
 
(i) Since $\alpha - 1 < 0$ and the density $f$ is non-negative, \[
\int f^{\alpha}\dd u = \int f\cdot f^{\alpha-1} \dd u \geq \|f\|_\infty^{\alpha-1}\int f\dd u=\|f\|_\infty^{\alpha - 1}.
\]
Then applying $\frac{1}{1-\alpha}\log(\cdot)$ proves (i).
 
(ii) Split $\mathbb{R}$ into $\{|u| \le 1\}$ and the $A_j := \{2^{j} < |u| \leq 2^{j+1}\}$,
$j \geq 0$. On $\{|u|\leq 1\}$, \eqref{eq:holderA} gives $\int f^{\alpha} \leq 2^{1-\alpha}$. On $A_j$, we have $|A_j| = 2^{j+1}$ and $\int_{A_j} f\dd u \leq \Prob(|U| > 2^{j}) \leq C 2^{-j\gamma}$ by assumption.
Therefore, \[
  \int_{A_j} f^{\alpha} \leq \left(2^{j+1}\right)^{1-\alpha}\left(C2^{-j\gamma}\right)^{\alpha} = 2^{1-\alpha}C^\alpha 2^{j(1-\alpha-\alpha\gamma)}.
\]
Since $\gamma > \frac{1-\alpha}{\alpha}$ the exponent $1-\alpha-\alpha\gamma$ is negative, we have \[
\sum_{j\geq 0} \int_{A_j} f^\alpha \dd u \leq 2^{1-\alpha}C^\alpha \sum_{j\geq 0} (2^{1-\alpha-\alpha\gamma})^j = \frac{2^{1-\alpha}C^\alpha}{1-2^{1-\alpha-\alpha\gamma}}.
\]
Adding the central piece gives the stated bound in (ii).
\end{proof}

\begin{theorem}
\label{thm:asymp}
Suppose $P \ll Q$, $V_\alpha>0$, and \[
\mathbb{E}_Q\left[|\ln R|^{p}\cdot R^{1/\alpha}\right] < \infty \quad\text{for some } p \geq 2 \text{ and } p > \frac{1-\alpha}{\alpha}.
\] Then we have \begin{equation}
\label{eq::asymptotic_main}
  R_\alpha\left(P^{\otimes n}\big\|Q^{\otimes n}\right) - D_{1/\alpha}\left(P^{\otimes n}\big\|Q^{\otimes n}\right) = \frac{1}{2}\log n + O(1),
\end{equation}
\end{theorem}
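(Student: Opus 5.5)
By \eqref{eq:Gn}, it suffices to show $h_\alpha(G_{\alpha,n}) = \frac12\log n + O(1)$, where $G_{\alpha,n}=S_n-\alpha E$ and $S_n$ is a sum of i.i.d.\ copies of $\imath(X)$ under $Q_{1/\alpha}$. Since R\'enyi differential entropy is translation invariant and satisfies $h_\alpha(cU)=h_\alpha(U)+\log c$, I will work with the standardized variable
\[
U_n:=\frac{G_{\alpha,n}-n\mu_\alpha}{\sqrt{n V_\alpha}},\qquad h_\alpha(G_{\alpha,n})=h_\alpha(U_n)+\tfrac12\log n+\tfrac12\log V_\alpha .
\]
Because $V_\alpha>0$, it then remains to show that $h_\alpha(U_n)$ is bounded above and below uniformly in $n$. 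I will obtain the two bounds from the two parts of Lemma~\ref{lem:bounds}.

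\emph{Upper bound via Lemma~\ref{lem:bounds}(ii).} The hypothesis $\mathbb E_Q[|\ln R|^p R^{1/\alpha}]<\infty$ is exactly $\mathbb E_{Q_{1/\alpha}}[|\imath(X)|^p]<\infty$, up to the normalizing constant $\mathbb E_Q[R^{1/\alpha}]$. That constant is finite, since $R^{1/\alpha}\le \max\{1,R^{1/\alpha}|\ln R|^p\cdot c\}$ on $\{R\ge e\}$ and is bounded elsewhere. With $p\ge2$, Rosenthal's inequality (or Marcinkiewicz--Zygmund) gives
\[
\mathbb E|S_n-n\mu_\alpha|^p\le C_p n^{p/2}.
\]
Also $\mathbb E(\alpha E)^p<\infty$. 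Hence $\mathbb E|U_n|^p\le C'$ uniformly in $n$, and Markov's inequality yields $\Prob(|U_n|>t)\le C' t^{-p}$ for $t\ge1$. Since $p>\frac{1-\alpha}{\alpha}$, Lemma~\ref{lem:bounds}(ii) with $\gamma=p$ gives an upper bound on $h_\alpha(U_n)$ that does not depend on $n$.

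\emph{Lower bound via Lemma~\ref{lem:bounds}(i).} Here I need $\|f_{G_{\alpha,n}}\|_\infty\le C/\sqrt n$. The density of $G_{\alpha,n}$ is the convolution of the law of $S_n$ with the density $k(z)=\alpha^{-1}e^{z/\alpha}\mathds 1(z\le0)$ of $-\alpha E$, so
\[
f_{G_{\alpha,n}}(z)=\mathbb E[k(z-S_n)]=\int_0^\infty \alpha^{-1}e^{-s/\alpha}\,\dd\Prob\big(S_n\in[z, z+s]\big)\text{-type expression}.
\]
Integrating by parts bounds this by $\sup_z \mathbb E[k(z-S_n)]\le \sum_{j\ge0}\alpha^{-1}e^{-j}\sup_x\Prob(S_n\in[x,x+\alpha])$. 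It therefore suffices to have the concentration bound $\sup_x\Prob(S_n\in[x,x+\alpha])\le C/\sqrt n$. This holds for any non-degenerate i.i.d.\ sum with finite variance, by the Kolmogorov--Rogozin concentration inequality: $Q(S_n;\lambda)\le C\lambda\big(n\,\lambda^2(1-Q(X;\lambda))\big)^{-1/2}$ for a fixed small $\lambda>0$. Because $V_\alpha>0$, the summand is not a.s.\ constant, so $1-Q(\imath(X);\lambda)>0$ for small $\lambda$; covering an interval of length $\alpha$ by finitely many intervals of length $\lambda$ gives the bound. Berry--Esseen would also work when $p\ge3$, but Kolmogorov--Rogozin avoids both a third-moment assumption and any lattice issues. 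Then $\|f_{U_n}\|_\infty=\sqrt{nV_\alpha}\,\|f_{G_{\alpha,n}}\|_\infty\le C''$, and Lemma~\ref{lem:bounds}(i) gives $h_\alpha(U_n)\ge-\log C''$.

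I expect the main obstacle to be this sup-norm bound. $S_n$ itself may have no density (it could be lattice or singular), so the $1/\sqrt n$ decay must be extracted purely from concentration of $S_n$ on windows of fixed length, with the exponential smoothing supplying the density. Care is needed to make the constant uniform in $z$. Combining the two bounds gives $h_\alpha(U_n)=O(1)$, and hence \eqref{eq::asymptotic_main}.
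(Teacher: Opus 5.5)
Your proposal is correct and follows the paper's proof. You reduce via \eqref{eq:Gn} to the standardized $G_{\alpha,n}$. You get the lower bound from Lemma~\ref{lem:bounds}(i) by discretizing the exponential kernel and applying Kolmogorov--Rogozin, and the upper bound from Lemma~\ref{lem:bounds}(ii) via Rosenthal plus Markov; the remaining differences (normalizing by $\sqrt{nV_\alpha}$ instead of $\sqrt n$, windows of length $\alpha$ instead of $1$, and handling $p=2$ together with $p>2$) are cosmetic.
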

 
\begin{proof}
Let $X_1, X_2, \dots$ are i.i.d.\ $Q_{1/\alpha}$ and $S_n:= \sum_{i=1}^{n}\imath(X_i)$, as the setting in \eqref{eq:Gn}.
 
We first prove that $\mathbb E_Q[R^{1/\alpha}]<\infty$ and $\mathbb E_{Q_{1/\alpha}}|\imath|^p <\infty$ from the assumption. Observe that $R^{1/\alpha}\leq e^{1/\alpha}$ on $R\leq e$ and $R^{1/\alpha}\leq |\ln R|^p\cdot R^{1/\alpha}$ on $R>e$, we have \[
\mathbb E_Q[R^{1/\alpha}] \leq e^{1/\alpha} + \mathbb E_Q[|\ln R|^p\cdot R^{1/\alpha}] < \infty.
\]
Also, by Jensen's inequality, we have $\mathbb E_Q[R^{1/\alpha}]\geq (\mathbb E_Q[R])^{1/\alpha}=1$. Then \begin{align}
\label{eq:allmoments}
    \mathbb E_{Q_{1/\alpha}}|\imath|^p = \int |\imath|^p \dd Q_{1/\alpha} = \int |\imath|^p\cdot\frac{R^{1/\alpha}}{\mathbb E_Q[R^{1/\alpha}]}\dd Q \leq \mathbb E_Q[|\ln R|^p\cdot R^{1/\alpha}] < \infty.
\end{align}

Since $p\geq 2$, H\"older's inequality gives $\mathbb E_{Q_{1/\alpha}}[\imath^2] \leq \left(\mathbb E_{Q_{1/\alpha}}[|\imath|^p]\right)^{2/p} <\infty$. 
Consequently, by Cauchy--Schwarz, $|\mu_\alpha|=\left|\mathbb E_{Q_{1/\alpha}}[\imath]\right|\leq \left(\mathbb E_{Q_{1/\alpha}}[\imath^2]\right)^{1/2}<\infty$, implying that \[
    V_\alpha = \Var_{Q_{1/\alpha}}(\imath) = \mathbb E_{Q_{1/\alpha}}[\imath^2]-\mu_\alpha^2 <\infty.
\]

Moreover, for the centered $p$-th moment, using $|a-b|^p\le 2^{p-1}(|a|^p+|b|^p)$, we obtain  \begin{align*}
    \mathbb E_{Q_{1/\alpha}}\left[|\imath - \mu_\alpha|^p\right] \leq 2^{p-1}\left(\mathbb E_{Q_{1/\alpha}}\left[|\imath|^p+|\mu_\alpha|^p\right]\right) \leq 2^{p-1}\left(\mathbb E_{Q_{1/\alpha}}\left[|\imath|^p\right]+\mathbb E_{Q_{1/\alpha}}\left[|\mu_\alpha|^p\right]\right)<\infty.
\end{align*}

By \eqref{eq:Gn} it now suffices to prove $h_\alpha(G_{\alpha,n}) = \frac{1}{2}\log n + O(1)$.
R\'enyi differential entropy satisfies $h_\alpha(aU+b) = h_\alpha(U) + \log a$ for $a > 0$. Hence, with \[
  \widetilde G_n := \frac{G_{\alpha,n} - n\mu_\alpha}{\sqrt n}
  = \frac{S_n - n\mu_\alpha}{\sqrt n} - \frac{\alpha E}{\sqrt n},
\]
we have \begin{equation}
\label{eq:scaling}
  h_\alpha(G_{\alpha,n}) = \frac{1}{2}\log n + h_\alpha\left(\widetilde G_n\right),
\end{equation}
so the theorem is equivalent to proving $h_\alpha(\widetilde G_n)=O(1)$. 
 
Next, we lower bound $h_\alpha(\widetilde G_n)$ by bounding its density \[
\widetilde g_n(u) = \sqrt{n}g_n(n\mu_\alpha+\sqrt{n}u)
\] 
with $\|\widetilde g_n\|_\infty=\sqrt{n}\|g_n\|_\infty$ using concentration function of $S_n$. Conditioning on $S_n$ and using that $\alpha E$ has the non-increasing density $\frac{1}{\alpha} e^{-v/\alpha}$ on $[0,\infty)$, \begin{align*}
  g_n(z) &= \mathbb{E}\left[\frac{1}{\alpha} e^{-(S_n-z)/\alpha}\mathds{1}(S_n \geq z)\right]\\
  &=\sum_{j\geq 0}\mathbb E\left[\frac{1}{\alpha}e^{-(S_n-z)/\alpha}\mathds 1(S_n-z\in[j,j+1)])\right]\\
  &\leq \sum_{j\geq 0}\mathbb E\left[\frac{1}{\alpha}e^{-j/\alpha}\mathds 1(S_n-z\in[j,j+1)])\right]\\
  &= \sum_{j\geq 0}\frac{1}{\alpha}e^{-j/\alpha}\Prob \left(S_n - z \in [j,j+1)\right)\\
  &\leq \sup_{y\in\mathbb{R}}\Prob \left(S_n \in [y,y+1)\right)\cdot \frac{1}{\alpha}\sum_{j\geq 0}e^{-j/\alpha}\\
  &= \frac{1}{\alpha(1-e^{-1/\alpha})}\cdot\sup_{y\in\mathbb{R}}\Prob \left(S_n \in [y,y+1)\right).
\end{align*}
 
It remains to bound that mass. Since $V_\alpha > 0$, we can choose $a < b$ with $\Prob(\imath \leq a) > 0$ and $\Prob(\imath \geq b) > 0$, and an $\ell\in (0,\min\{1, b-a\})$. Then any interval of length $\ell$ is disjoint from at least one of $(-\infty,a]$ and $[b,\infty)$.
Hence, \[
  q_\ell := \sup_{y\in\mathbb{R}}\Prob \left(\imath \in [y,y+\ell]\right) \leq  1 - \min\left\{\Prob(\imath \leq a),\,\Prob(\imath \geq b)\right\} < 1 .
\]
The Kolmogorov--Rogozin concentration inequality \cite{rogozin1961increase} applied to the $n$ i.i.d.\ summands of $S_n$ with common interval length $\ell \leq 1$ gives \begin{align}
    \sup_{y\in\mathbb{R}}\Prob \left(S_n \in [y,y+1)\right) &\leq \sup_{y\in\mathbb{R}}\Prob \left(S_n \in [y,y+1]\right)\notag\\
    &\leq \lceil 1/\ell\rceil\cdot \sup_{y\in\mathbb{R}}\Prob \left(S_n \in [y,y+\ell]\right)\notag\\
    &\leq \frac{\lceil 1/\ell\rceil\cdot C_R}{\left(\sum_{k=1}^n \left(1-\sup_{y\in\mathbb R}\Prob \left(\imath(X_k)\in[y,y+\ell]\right)\right)\right)^{1/2}}\notag\\
    &=\frac{\lceil 1/\ell\rceil\cdot C_R}{\left(n\left(1-\sup_{y\in\mathbb R}\Prob \left(\imath \in[y,y+\ell]\right)\right)\right)^{1/2}}\notag\\
    &= \frac{\lceil 1/\ell\rceil\cdot C_R}{\left(n(1-q_\ell)\right)^{1/2}} = \frac{1}{\sqrt{n}}\cdot \frac{\lceil 1/\ell\rceil\cdot C_R}{\sqrt{1-q_\ell}}. \label{eq:rogozin}
\end{align}
 
Combining, $\|g_n\|_\infty \le C_1 n^{-1/2}$ with $C_1=\frac{\lceil 1/\ell\rceil C_R}{\sqrt{1-q_\ell}\cdot \alpha(1-e^{-1/\alpha})}$. The density of $\widetilde G_n$ therefore satisfies $\|\widetilde g_n\|_\infty\leq C_1$. By Lemma~\ref{lem:bounds} (i), we have \begin{equation}
\label{eq:lower}
  h_\alpha\left(\widetilde G_n\right) \geq -\log \|\widetilde g_n\|_\infty \geq -\log C_1.
\end{equation}
 
Then we show that $\widetilde G_n$ obeys a tail bound $\Prob(|\widetilde G_n| > t) \le C t^{-\gamma}$ for all $t \ge 1$, uniformly in $n$, with some
$\gamma > \frac{1-\alpha}{\alpha}$, so that Lemma~\ref{lem:bounds} (ii) applies. Since the summands of $S_n - n\mu_\alpha$ are i.i.d.\ and centered, we have \[
\mathbb{E}\left[\frac{S_n - n\mu_\alpha}{\sqrt n}\right] = 0, \qquad  \mathbb{E}\left[\left(\frac{S_n - n\mu_\alpha}{\sqrt n}\right)^2\right] = \frac{\mathrm{Var}(S_n)}{n} = V_\alpha \in (0,\infty),
\] 
and, by Rosenthal's inequality \cite{rosenthal1970subspaces} for sums of i.i.d.\ centered random variables with $p > 2$, \begin{align*}
  \mathbb{E}\left|S_n - n\mu_\alpha\right|^p &= \mathbb E\left|\sum_{k=1}^n\left(\imath(X_k)-\mu_\alpha\right)\right|^p\\
  &\leq C_p\left(\sum_{k=1}^n\mathbb E|\imath(X_k)-\mu_\alpha|^p + \left(\sum_{k=1}^n \mathbb E[(\imath(X_k)-\mu_\alpha)^2]\right)^{p/2}\right)\\
  &=C_p\left(n\mathbb E_{Q_{1/\alpha}}|\imath -\mu_\alpha|^p + \left(nV_\alpha\right)^{p/2}\right)<\infty.
\end{align*}
Since $\mathbb{E}[E^{p}] = \Gamma(p+1)$, Minkowski's inequality in $L^p$ gives
\begin{align*}
    \left\|\widetilde G_n\right\|_{p} &\leq \left\|\frac{|S_n-n\mu_\alpha|}{\sqrt{n}}\right\|_{p} + \frac{\alpha}{\sqrt n}\|E\|_{p}\\
    &\leq C_p (n^{1-p/2}\mathbb E_{Q_{1/\alpha}}|\imath -\mu_\alpha|^p + \left(V_\alpha\right)^{p/2})^{1/p} + \alpha\,\Gamma(p+1)^{1/p} \\
    &\leq C_p (\mathbb E_{Q_{1/\alpha}}|\imath -\mu_\alpha|^p + \left(V_\alpha\right)^{p/2})^{1/p} + \alpha\,\Gamma(p+1)^{1/p} =: C_2,
\end{align*}
and Markov's inequality gives $\Prob(|\widetilde G_n| > t) \leq C_2^p t^{-p}$ with $\gamma = p>\frac{1-\alpha}{\alpha}$.

For $p=2$, we have \[
\mathbb E\left[\left|\frac{S_n-n\mu_\alpha}{\sqrt n} \right|^2\right]=
\frac1n\mathbb E\left[\left|\sum_{i=1}^n(\imath(X_i)-\mu_\alpha)\right|^2\right] =
\frac1n\sum_{i=1}^n\mathbb E\left[|\imath(X_i)-\mu_\alpha|^2\right] =V_\alpha.
\]
Hence, \[
\mathbb E|\widetilde G_n|^2 =
\mathbb E\left|
\frac{S_n-n\mu_\alpha}{\sqrt n}
\right|^2 + \frac{\alpha^2}{n}\mathbb E[E^2]
= V_\alpha+\frac{2\alpha^2}{n} 
\leq V_\alpha+2\alpha^2.
\]
Then Chebyshev's inequality gives $\Pr(|\widetilde G_n|>t)\leq \frac{V_\alpha+2\alpha^2}{t^2}$. Then we can take $\gamma=2$, satisfying the condition in Lemma~\ref{lem:bounds} (ii) since we assumed $2=p>\frac{1-\alpha}{\alpha}$.

Hence, Lemma~\ref{lem:bounds} (ii) applies with the stated $\gamma$ and $C$, implying that \begin{equation}
\label{eq:upper}
  h_\alpha\left(\widetilde G_n\right) \leq \frac{1}{1-\alpha}\log\left[2^{1-\alpha}\left(1 + \frac{C^{\alpha}}{1-2^{\,1-\alpha-\alpha\gamma}}\right)\right].
\end{equation}
 
By \eqref{eq:lower} and \eqref{eq:upper} the sequence $h_\alpha(\widetilde G_n)$ is bounded for any $n\geq 1$ by constants depending only on $\alpha$ and $\imath$ under $Q_{1/\alpha}$. Substituting into \eqref{eq:scaling}
gives $h_\alpha(G_{\alpha, n}) = \frac{1}{2}\log n + O(1)$, and \eqref{eq:Gn} then gives \eqref{eq::asymptotic_main}.
\end{proof}
 
\begin{remark}
The threshold \(\alpha=1/3\) determines whether a variance constraint alone controls the Rényi differential entropy from above. When \(\alpha>1/3\), the maximum R\'enyi entropy result under a variance constraint \cite{johnson2007some} gives directly
\[
h_\alpha(G_{\alpha,n})
\le
\frac12\log(nV_\alpha+\alpha^2)+C_\alpha
=
\frac12\log n+O(1).
\]
Thus, in this regime, finite nonzero variance is sufficient for the upper bound. When \(\alpha\le1/3\), finite variance alone does not control \(h_\alpha\). This is why Theorem~\ref{thm:asymp} imposes the stronger \(p\)-th moment condition and uses the polynomial tail arguments.
\end{remark}

\clearpage
\bibliographystyle{IEEEtran}
\bibliography{ref.bib}

\end{document}